%%%%%%%%%%%%%%%%%%%%%%%%%%%%%%%%%%%%%%%%%%%%%%%%%%%%%%%%%%%%%%%%%%%%%
%%                                                                 %%
%% Please do not use \input{...} to include other tex files.       %%
%% Submit your LaTeX manuscript as one .tex document.              %%
%%                                                                 %%
%% All additional figures and files should be attached             %%
%% separately and not embedded in the \TeX\ document itself.       %%
%%                                                                 %%
%%%%%%%%%%%%%%%%%%%%%%%%%%%%%%%%%%%%%%%%%%%%%%%%%%%%%%%%%%%%%%%%%%%%%

%%\documentclass[referee,sn-basic]{sn-jnl}% referee option is meant for double line spacing

%%=======================================================%%
%% to print line numbers in the margin use lineno option %%
%%=======================================================%%

%%\documentclass[lineno,sn-basic]{sn-jnl}% Basic Springer Nature Reference Style/Chemistry Reference Style

%%======================================================%%
%% to compile with pdflatex/xelatex use pdflatex option %%
%%======================================================%%

%%\documentclass[pdflatex,sn-basic]{sn-jnl}% Basic Springer Nature Reference Style/Chemistry Reference Style

%%\documentclass[sn-basic]{sn-jnl}% Basic Springer Nature Reference Style/Chemistry Reference Style
\documentclass[sn-mathphys]{sn-jnl}% Math and Physical Sciences Reference Style
%%\documentclass[sn-aps]{sn-jnl}% American Physical Society (APS) Reference Style
%%\documentclass[sn-vancouver]{sn-jnl}% Vancouver Reference Style
%%\documentclass[sn-apa]{sn-jnl}% APA Reference Style
%%\documentclass[sn-chicago]{sn-jnl}% Chicago-based Humanities Reference Style
%%\documentclass[sn-standardnature]{sn-jnl}% Standard Nature Portfolio Reference Style
%%\documentclass[default]{sn-jnl}% Default
%%\documentclass[default,iicol]{sn-jnl}% Default with double column layout

%%%% Standard Packages
%%<additional latex packages if required can be included here>
\usepackage{amssymb,bbm,amsmath,hyperref,booktabs,multirow,caption}

\captionsetup[table]{skip=10pt}
\hypersetup{ colorlinks=true, linkcolor=blue, filecolor=magenta, urlcolor=cyan }

%%%%

%%%%%=============================================================================%%%%
%%%%  Remarks: This template is provided to aid authors with the preparation
%%%%  of original research articles intended for submission to journals published 
%%%%  by Springer Nature. The guidance has been prepared in partnership with 
%%%%  production teams to conform to Springer Nature technical requirements. 
%%%%  Editorial and presentation requirements differ among journal portfolios and 
%%%%  research disciplines. You may find sections in this template are irrelevant 
%%%%  to your work and are empowered to omit any such section if allowed by the 
%%%%  journal you intend to submit to. The submission guidelines and policies 
%%%%  of the journal take precedence. A detailed User Manual is available in the 
%%%%  template package for technical guidance.
%%%%%=============================================================================%%%%

\jyear{2022}%

%% as per the requirement new theorem styles can be included as shown below
\theoremstyle{thmstyleone}%
\newtheorem{theorem}{Theorem}%  meant for continuous numbers
%%\newtheorem{theorem}{Theorem}[section]% meant for sectionwise numbers
%% optional argument [theorem] produces theorem numbering sequence instead of independent numbers for Proposition
\newtheorem{proposition}[theorem]{Proposition}% 

\theoremstyle{thmstyletwo}%
\newtheorem{remark}{Remark}%

\theoremstyle{thmstylethree}%
\newtheorem{definition}{Definition}%

\raggedbottom
%%\unnumbered% uncomment this for unnumbered level heads

\begin{document}

\title{Optimal portfolio selection of many players under relative performance criteria in the market model with random coefficients}

%%=============================================================%%
%% Prefix	-> \pfx{Dr}
%% GivenName	-> \fnm{Joergen W.}
%% Particle	-> \spfx{van der} -> surname prefix
%% FamilyName	-> \sur{Ploeg}
%% Suffix	-> \sfx{IV}
%% NatureName	-> \tanm{Poet Laureate} -> Title after name
%% Degrees	-> \dgr{MSc, PhD}
%% \author*[1,2]{\pfx{Dr} \fnm{Joergen W.} \spfx{van der} \sur{Ploeg} \sfx{IV} \tanm{Poet Laureate} 
%%                 \dgr{MSc, PhD}}\email{iauthor@gmail.com}
%%=============================================================%%

\author*[1]{\fnm{Jeong Yin} \sur{Park}}\email{jeongyin.park@kaist.ac.kr}

\affil[1]{\orgdiv{Department of Mathematical Sciences}, \orgname{Korea Advanced Institute of Science and Technology (KAIST)}, \orgaddress{\city{Daejeon}, \postcode{34141}, \country{Republic of Korea}}}

%%==================================%%
%% sample for unstructured abstract %%
%%==================================%%

\abstract{We study the optimal portfolio selection problem under relative performance criteria in the market model with random coefficients from the perspective of many players game theory. We consider five random coefficients which consist of three market parameters which are used in the risky asset price modeling and two preference parameters which are related to risk attitude and impact of relative performance. We focus on two cases; either all agents have Constant Absolute Risk Aversion (CARA) risk preferences or all agents have Constant Relative Risk Aversion (CRRA) risk preferences for their investment optimization problem. For each case, we show that the forward Nash equilibrium and the mean field equilibrium exist for the $n$-agent game and the corresponding mean field stochastic optimal control problem, respectively. To extend the $n$-agent game to the continuum of players game, we introduce a measure dependent forward relative performance process and apply an optimization over controlled dynamics of McKean-Vlasov type. We conclude that our optimal portfolio formulas extend the corresponding results of the market model with constant coefficients.
}

\keywords{Optimal investment, Forward relative performance process, Relative performance, Mean field stochastic optimal control, Common noise Merton problem}

\pacs[JEL Classification]{G11, C73}

%%\pacs[MSC Classification]{35A01, 65L10, 65L12, 65L20, 65L70}

\maketitle

\section{Introduction}\label{sec1}
\indent We study the optimal investment problem under relative performance criteria in the market model with random coefficients. We derive the explicit solutions for both $n$-player game and mean field type control problem. Our market model can be informally outlined, with full details given in Sect. \ref{sec3}, as follows: The agents can only trade in a common riskless asset and a specific risky asset that is influenced by a common market signal. Each agent measures its relative performance, taking into account competition among agents, and using the difference between their own wealth and the average wealth of all agents as a benchmark.

Throughout the study, we consider asset specialization, which represents the case that each agent can trade between their individual assigned risky asset and a common riskless asset. Asset specialization can be observed for a variety of reasons. Typically, agents are familiar with a certain field, and they want to reduce the cost to improve their knowledge of new stocks, as well as whatever trading costs and constraints exist in the market. The evidence for the above reasons is well explained in terms of empirical research(\cite{CM1,FK1,KS1}) and theoritical research(\cite{AGW1,VV1}).

The classical backward approach to finding an optimal portfolio that maximizes the expected utility has limitations in that it cannot be applied when the utility changes over time or when the maturity is not fixed. Therefore, in order to compensate for the shortcomings of the existing method, Musiela and Zariphopoulou \cite{MZ1} introduce a random field called the forward performance process. See also \cite{MZ2,Z1}. We focus on a class of forward performance processes that is differentiable in time. In other words, its Itô dynamics has zero volatility.

Mean field game theory was first introduced in \cite{HM1,LL1} and it is used in various fields of economics and financial mathematics. See also \cite{LL3}. For example, it is applied in economic growth, optimal execution, systemic risk, and large population behavior models. In previous studies, there have been few explicitly solvable mean field game models in the presence of common noise. See \cite{CF1,LS1,LZ1,RP1,RP3}. We add two examples of explicitly solvable mean field type models. To obtain a rigorous analysis of special nonlinear partial differential equations, McKean-Vlasov type stochastic differential equation was first introduced by McKean in \cite{M1,M2}. Lasry and Lions interpret the mean field games as a criticallity for some McKean-Vlasov control problem in \cite{LL1,LL2,LL3}. Carmona et al. \cite{CDL1} thoroughly examine the differences between mean field games and McKean-Vlasov problems in full detail. In short, the difference between the two methods is which happens first: optimization or performing passaging to the limit. If we optimize first, the asymptotic problem is called a mean-field game. On the other hand, if passaging to the limit is performed first, it is refered to as an optimization problem over controlled dynamics of McKean-Vlasov type or a mean-field stochastic optimal control.

Various previous studies have been conducted from the perspective of many player games to select the optimal portfolio under the relative performance concerns. Lacker and Zariphopoulou \cite{LZ1} obtain the optimal portfolio using the backward approach, which maximizes the expected utility. They consider two typical utilities, exponential Constant Absolute Risk Aversion (CARA) and power Constant Relative Risk Aversion (CRRA) utilities. Reis and Platonov obtain the optimal portfolio using the forward approach, which uses the forward performance process and utilize the supermartingality and martingality conditions of it. They consider the exponential CARA risk preference in \cite{RP1} and the power CRRA risk preference in \cite{RP3}. In this study, we extend the results of Reis and Platonov to the market model with random coefficients which consist of three market parameters $\mu_t,\nu_t,\sigma_t$, and two preference parameters $\delta_t,\theta_t$. To the best of our knowledge, optimal portfolio selection under relative performance concerns in the market model with random coefficients is considered only in \cite{AG1}. They solve the 2-player game, but we solve the $n$-player game and the corresponding mean field stochastic optimal control.

In this article, we define the average wealth of all agents as a new random variable and introduce a forward performance process that depends on its law. Even with the existing utility, $n$-agent game in the market model with random coefficients can be analyzed. See \cite{AG1}. However, it is difficult to extend the $n$-agent game to the continuum of agents by introducing the type vector and type distribution framework used in \cite{LS1,LZ1,RP1,RP3} since the coefficients are random processes. Therefore, we apply the mean field stochastic optimal control theory to portfolio theory. In other words, we take the limit as $n\to\infty$ first, instead of optimizing, to investigate the asymptotic regime, and then optimize over controlled dynamics of McKean-Vlasov type. Here, we apply the theory of the conditional propagation of chaos. In general, finding the Nash equilibrium in the $n$-agent game is challenging. However, the mean field type control problem is tractable to handle since it reduces the dimensionality of the problem. Furthermore, it allows us to obtain an approximate solution of the $n$-agent game. Therefore, we expect to contribute to solving the $n$-agent game, in which it is difficult to find the Nash equilibrium, by presenting a new methodology that extends to the mean field type control problem.

The paper is organized as follows: In Sect. \ref{sec2}, we construct the underlying spaces and recall the auxiliary results regarding the Lions derivative, empirical projection map and the Itô-Wentzell-Lions formula. Then, in Sect. \ref{sec3}, we formulate and solve the $n$-agent model and the corresponding mean field type control problem summarized above when agents have CARA risk preferences. Finally, in Sect. \ref{sec4}, we present the analogous results when agents have CRRA risk preferences.

\section{Underlying spaces and preliminaries}\label{sec2}

Let $(\Omega^0,\mathcal{F}^0,\mathbb{P}^0),\;(\Omega^1,\mathcal{F}^1,\mathbb{P}^1)$ be the atomless separable completely metrizable probability spaces, endowed with two right-continuous and complete filtrations $(\mathcal{F}^0)_{t\geq0},\;(\mathcal{F}^1)_{t\geq0}$. The space $(\Omega^0,\mathcal{F}^0,\mathbb{P}^0)$ is said to be atomless if for any $A\in\mathcal{F}^0$ with $\mathbb{P}^0(A)>0$, there exists $B\in\mathcal{F}^0$, $B\subset A$, such that $0<\mathbb{P}^0(B)<\mathbb{P}^0(A)$. We define the one-dimensional standard Brownian motion $W^0=(W_t^0)_{t\geq0}$ on $(\Omega^0,\mathcal{F}^0,\mathbb{P}^0)$, and the $n$-dimensional standard Brownian motion $W=(W^1,\ldots,W^n)$ on $(\Omega^1,\mathcal{F}^1,\mathbb{P}^1)$, where $W^0$ and $W^i$'s are independent. Let $(\Omega,\mathcal{F},\mathbb{P})$ be the completion of the product space $(\Omega^0\times\Omega^1,\mathcal{F}^0\otimes\mathcal{F}^1,\mathbb{P}^0\otimes\mathbb{P}^1)$. Generic element of $\Omega$ is denoted $\omega=(\omega^0,\omega^1)$ with $\omega^0\in\Omega^0,\;\omega^1\in\Omega^1$.\\
In this study, note that the forward performance process is a random field, not a deterministic function. That is, the dynamics of the forward performance process itself exists, and in order to compute the dynamics of the measure dependent forward relative performance process, the derivative with respect to the measure argument must be defined. There are several notions of differentiability for functions, and we use the one introduced by Lions in \cite{LL1}.\\
Let $\mathcal{P}_2(\mathbb{R})$ be the collection of probability measures defined on $\mathbb{R}$ that have finite second moment, and $u$ be a real-valued function defined on $\mathcal{P}_2(\mathbb{R})$. We consider the \textit{lifting} function $\tilde{u}:L^2(\Omega,\mathcal{F},\mathbb{P};\mathbb{R})\to\mathbb{R}$ of $u$ by
\begin{equation*}
	\tilde{u}(X)=u\big(\mathcal{L}(X)\big),
\end{equation*} where $L^2(\Omega,\mathcal{F},\mathbb{P};\mathbb{R})$ is a space of square-integrable real-valued random variables, and we denote the law of a random variable $Z$ as $\mathcal{L}(Z)$.

\begin{definition}(\textit{Lions}' derivative or L-derivative)\\
	 For a given function $u$ defined on $\mathcal{P}_2(\mathbb{R})$, it is said to be L-differentiable at $\mu\in\mathcal{P}_2(\mathbb{R})$, if the lifting function $\tilde{u}$ is Fr\'echet differentiable at $X$ satisfying $\mu=\mathcal{L}(X)$.
\end{definition}
\begin{remark}
	On the atomless probability space $(\Omega,\mathcal{F},\mathbb{P})$, it is known that for any $\mu\in\mathcal{P}_2(\mathbb{R})$, we can construct a random variable $X:\Omega\to\mathbb{R}$ such that $\mu=\mathcal{L}(X)$. Thus, the lifting of function is well defined on the atomless probability space.  Since $L^2(\Omega,\mathcal{F},\mathbb{P};\mathbb{R})$ is a Hilbert space, we can identify whether the lifting function is Fr\'echet differentiable.
\end{remark}
We denote the corresponding bounded linear operator by $D\tilde{u}(\cdot)$, and $\frac{\partial{u}}{\partial\mu}(\mu)(\cdot)$ as the L-derivative  of $u$, satisfying $D\tilde{u}(X)=\frac{\partial{u}}{\partial\mu}(\mu)(X)$. Here, if the gradient $D\tilde{u}(X)$ is a continuous function of $X$ from the space $L^2(\Omega,\mathcal{F},\mathbb{P};\mathbb{R})$ into itself, then we say that $u$ is continuously L-differentiable.\\
Next, we refer the concept of \textit{empirical projection} of $u$ defined on $\mathcal{P}_2(\mathbb{R})$ given in \cite{CCD1}, and refer to the relationship between its spatial derivative and the L-derivative of $u$ given in \cite[Proposition 5.35 and Proposition 5.91]{CD1}. For $z\in\mathbb{R}$, we denote $\delta_z$ as a Dirac point mass at $z$.
\begin{definition}(Empirical projection of a map)\\
	Given a real-valued function $u$ defined on $\mathcal{P}_2(\mathbb{R})$ and an integer $N\geq1$, then we define the empirical projection $u^{(N)}:\mathbb{R}^N\to\mathbb{R}$ of $u$ by:
	\begin{equation*}
		u^{(N)}(x^1,\ldots,x^N)=u\Big(\frac{1}{N}\sum_{k=1}^N \delta_{x^k}\Big).
	\end{equation*}
\end{definition}

\begin{proposition}\label{prop1}
	Let $u:\mathcal{P}_2(\mathbb{R})\to\mathbb{R}$ be a continuously L-differentiable function, then, for all $N>1$, the empirical projection $u^{(N)}$ is of $\mathcal{C}^2\big(\mathbb{R}^N\big)$. Furthermore, for all $(x^1,\ldots,x^N)\in\mathbb{R}^N,\;i,j\in\{1,\ldots,N\}$, we have the following relationship between spatial derivative of $u^{(N)}$ and the L-derivative of $u$:
	\begin{equation*}
		\frac{\partial u^{(N)}}{\partial{x^i}}(x^1,\ldots,x^N)=\frac{1}{N}\frac{\partial u}{\partial{\mu}}\Big(\frac{1}{N}\sum_{k=1}^{N}\delta_{x^k}\Big)(x^i),
	\end{equation*}
	\begin{align*}
		\frac{\partial^2 u^{(N)}}{\partial x^j\partial{x^i}}(x^1,\ldots,x^N)=\frac{1}{N}&\frac{\partial^2u}{\partial v\partial{\mu}}\Big(\frac{1}{N}\sum_{k=1}^{N}\delta_{x^k}\Big)(x^i)\mathbbm{1}_{\{i=j\}}\\
		&+\frac{1}{N^2} \frac{\partial^2 u}{\partial\mu^2}\Big(\frac{1}{N}\sum_{k=1}^N\delta_{x^k}\Big)(x^i,x^j),
	\end{align*}
	here $\mathbbm{1}_{\{i=j\}}$ is an indicator function that has a value of 1 when superscripts $i=j$ and 0 otherwise.
\end{proposition}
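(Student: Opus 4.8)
The plan is to reduce everything to the lifting $\tilde{u}$ and to exploit the Fréchet differentiability granted by the hypothesis, rather than manipulating the measure argument directly. First I would fix $(x^1,\ldots,x^N)\in\mathbb{R}^N$ and write $\mu_N=\frac{1}{N}\sum_{k=1}^N\delta_{x^k}$. Since $(\Omega,\mathcal{F},\mathbb{P})$ is atomless, it carries a measurable partition $\{A_1,\ldots,A_N\}$ with $\mathbb{P}(A_k)=1/N$ for every $k$; setting $X=\sum_{k=1}^N x^k\mathbbm{1}_{A_k}$ gives $\mathcal{L}(X)=\mu_N$, so that $u^{(N)}(x^1,\ldots,x^N)=\tilde{u}(X)$. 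This representation is the device that converts a spatial perturbation of the $i$-th coordinate into a Hilbert-space perturbation of $X$ supported on $A_i$.

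For the first-order relation I would perturb only the $i$-th value, replacing $x^i$ by $x^i+h$, which corresponds to the shifted variable $X_h=X+h\mathbbm{1}_{A_i}$ whose law is $\frac{1}{N}\sum_{k\neq i}\delta_{x^k}+\frac{1}{N}\delta_{x^i+h}$, i.e.\ exactly the empirical measure of the perturbed vector. Fréchet differentiability of $\tilde{u}$ at $X$ then yields
\begin{equation*}
u^{(N)}(\ldots,x^i+h,\ldots)-u^{(N)}(\ldots,x^i,\ldots)=\big\langle D\tilde{u}(X),h\mathbbm{1}_{A_i}\big\rangle+o\big(\|h\mathbbm{1}_{A_i}\|_{L^2}\big).
\end{equation*}
Using $D\tilde{u}(X)=\frac{\partial u}{\partial\mu}(\mu_N)(X)$, the fact that $X\equiv x^i$ on $A_i$, and $\mathbb{P}(A_i)=1/N$, the pairing equals $\frac{h}{N}\frac{\partial u}{\partial\mu}(\mu_N)(x^i)$; dividing by $h$ and letting $h\to0$ gives the stated first-order formula. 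Continuity of $X\mapsto D\tilde{u}(X)$ (continuous L-differentiability) transfers, along the continuous map $(x^1,\ldots,x^N)\mapsto X$, to continuity of the right-hand side in $(x^1,\ldots,x^N)$, which is what shows $u^{(N)}\in\mathcal{C}^1$.

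For the second-order relation I would differentiate the first-order identity $\frac{\partial u^{(N)}}{\partial x^i}=\frac{1}{N}\,g(\mu_N,x^i)$, where $g(\mu,v):=\frac{\partial u}{\partial\mu}(\mu)(v)$, with respect to $x^j$. The coordinate $x^j$ enters $g(\mu_N,x^i)$ in two distinct ways: through the measure $\mu_N$ (always), and through the spatial slot $x^i$ (only when $j=i$). The spatial contribution, present iff $i=j$, produces $\partial_v g(\mu_N,x^i)=\frac{\partial^2 u}{\partial v\partial\mu}(\mu_N)(x^i)$ and, after the $1/N$ prefactor, yields the first term. For the measure contribution I would apply the already-proved first-order formula to the function $\mu\mapsto g(\mu,x^i)$ with $x^i$ frozen: its empirical projection is precisely $(x^1,\ldots,x^N)\mapsto g(\mu_N,x^i)$, so differentiating in $x^j$ through the measure gives $\frac{1}{N}\frac{\partial}{\partial\mu}\big[g(\cdot,x^i)\big](\mu_N)(x^j)=\frac{1}{N}\frac{\partial^2 u}{\partial\mu^2}(\mu_N)(x^i,x^j)$, and combined with the outer $1/N$ this is the $\frac{1}{N^2}$ term. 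Summing the two contributions reproduces the claimed expression.

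The main obstacle is regularity rather than algebra: one must justify that these formal derivatives genuinely exist and are continuous in order to conclude $u^{(N)}\in\mathcal{C}^2(\mathbb{R}^N)$ and to legitimize differentiating under the first-order identity. This requires knowing that $g(\mu,v)=\frac{\partial u}{\partial\mu}(\mu)(v)$ is itself jointly regular --- continuously differentiable in $v$ and continuously L-differentiable in $\mu$ --- so that both $\frac{\partial^2 u}{\partial v\partial\mu}$ and $\frac{\partial^2 u}{\partial\mu^2}$ are well defined and continuous; this is the standard full $\mathcal{C}^2$-in-the-measure-sense hypothesis underlying \cite[Proposition 5.35 and Proposition 5.91]{CD1}. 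Granting it, continuity of both second-order L-derivatives pulled back along $(x^1,\ldots,x^N)\mapsto\mu_N$ gives continuity of all second partials of $u^{(N)}$, and the symmetry of $\frac{\partial^2 u}{\partial\mu^2}$ in its two spatial arguments is consistent with the symmetry of the mixed spatial partials.
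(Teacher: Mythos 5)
The paper does not prove this proposition itself; it imports it by citation from \cite[Propositions 5.35 and 5.91]{CD1}, and your argument is essentially a correct reconstruction of the proof given there: represent the empirical measure as the law of a simple random variable $X=\sum_k x^k\mathbbm{1}_{A_k}$ on an atomless space with $\mathbb{P}(A_k)=1/N$, convert the coordinate perturbation into the $L^2$-perturbation $h\mathbbm{1}_{A_i}$, and read off the first-order formula from Fr\'echet differentiability of the lifting together with the structural identity $D\tilde{u}(X)=\frac{\partial u}{\partial\mu}(\mathcal{L}(X))(X)$; the second-order formula then follows by differentiating the first-order identity and splitting the dependence on $x^j$ into the measure slot and the spatial slot. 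Your closing observation is also a genuine and correct catch: continuous L-differentiability alone only yields $u^{(N)}\in\mathcal{C}^1$, and the $\mathcal{C}^2$ conclusion together with the second-order identity requires the full $\mathcal{C}^2$ regularity in the measure sense (existence and joint continuity of $\frac{\partial^2 u}{\partial v\partial\mu}$ and $\frac{\partial^2 u}{\partial\mu^2}$) assumed in \cite[Proposition 5.91]{CD1} but not stated in the proposition's hypothesis here.
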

Note that $\frac{\partial^2u}{\partial v\partial{\mu}}(\mu)(v)$ can be considered as a real-valued function defined on $\mathbb{R}$. Thus, we denote the classical spatial derivative of $\frac{\partial u}{\partial{\mu}}(\mu)(v)$ by $\frac{\partial^2u}{\partial v\partial{\mu}}(\mu)(v)$. We refer the reader to \cite[Chapter 5]{CD1} for a detailed explanation of the differentiation of functions of measures.\\
Previous studies \cite{RP1,RP3} used the Itô-Wentzell formula when calculating the dynamics of a forward performance process, but since we deals with a measure dependent forward relative performance process, we use the Itô-Wentzell-Lions formula described below. We refer the reader to \cite{RP2} for a detailed explanation of the Itô-Wentzell-Lions formula.

\begin{proposition}\label{prop2}(Itô-Wentzell-Lions' formula)\\
	Suppose that dynamics of two stochastic processes $X_t,\;Y_t$ are given by
	\begin{equation*}
		\mathrm{d}X_t=b_t\mathrm{d}t+a_t\mathrm{d}W_t+a_t^0\mathrm{d}W_t^0,
	\end{equation*}
	\begin{equation*}
		\mathrm{d}Y_t=\beta_t\mathrm{d}t+\gamma_t\mathrm{d}W_t+\gamma_t^0\mathrm{d}W_t^0,
	\end{equation*}
	where $b_t,\;a_t\;a_t^0\;,\beta_t\;\gamma_t\;\gamma_t^0$ are real-valued $\mathcal{F}$-progressively measurable processes and satisfy
	\begin{equation*}
		\mathbb{E}\Big[\int_0^T\vert b_s\vert+\vert a_s\vert^2+\vert a_s^0\vert^2\Big]<\infty,
	\end{equation*}
	\begin{equation*}
		\mathbb{E}\Big[\int_0^T\vert\beta_s\vert^2+\vert\gamma_s\vert^4+\vert\gamma_s^0\vert^4\Big]<\infty.
	\end{equation*}
	Furthermore, dynamics of the given random field $u:\Omega\times(0,\infty)\times\mathcal{P}_2(\mathbb{R})\times[0,\infty)\to\mathbb{R}$ is given by
	\begin{equation*}
		\mathrm{d}u(x,\mu,t)=\phi(x,\mu,t)dt+\psi(x,\mu,t)\mathrm{d}W_t+\psi^0(x,\mu,t)\mathrm{d}W_t^0.
	\end{equation*}
	Let $\mu_t(\omega^0)=\mathcal{L}\big(Y_t(\omega^0,\cdot)\big)$, then under sufficient regularity and local boundedness, $\mathbb{P}^0$-almost surely, dynamics of $u(X_t,\mu_t,t)$ is given by
	\begin{align*}
		\begin{split}
			\mathrm{d}u(X_t,\mu_t,t)=\;&\phi(X_t,\mu_t,t)\mathrm{d}t+\psi(X_t,\mu_t,t)\mathrm{d}W_t+\psi^0(X_t,\mu_t,t)\mathrm{d}W_t^0\\
			&+\frac{\partial u}{\partial x}(X_t,\mu_t,t)\Big(b_t\mathrm{d}t+a_t\mathrm{d}W_t+a_t^0\mathrm{d}W_t^0\Big)\\
			&+\frac{1}{2}\frac{\partial^2 u}{\partial x^2}(X_t,\mu_t,t)\Big(\vert a_t\vert^2+\vert a_t^0\vert^2\Big)\mathrm{d}t\\
			&+\widetilde{\mathbb{E}}^1\bigg[\frac{\partial u}{\partial \mu}(X_t,\mu_t,t)(\widetilde{Y_t})\widetilde{\beta_t}\bigg]\mathrm{d}t\\
			&+\widetilde{\mathbb{E}}^1\bigg[\frac{\partial u}{\partial \mu}(X_t,\mu_t,t)(\widetilde{Y_t})\widetilde{\gamma_t^{0}}\bigg]\mathrm{d}W_t^0\\
			&+\frac{1}{2}\widetilde{\mathbb{E}}^1\bigg[\frac{\partial^2 u}{\partial v \partial \mu}(X_t,\mu_t,t)(\widetilde{Y_t})\big(\vert\widetilde{\gamma_t}\vert^2+\vert\widetilde{\gamma_t^{0}}\vert^2\big)\bigg]\mathrm{d}t\\
			&+\frac{1}{2}\widehat{\mathbb{E}}^1\bigg[\widetilde{\mathbb{E}}^1\bigg[\frac{\partial^2 u}{\partial \mu^2}(X_t,\mu_t,t)(\widetilde{Y_t},\widehat{Y_t})\widetilde{\gamma_t^{0}}\widehat{\gamma_t^{0}}\bigg]\bigg]\mathrm{d}t\\
			&+\widetilde{\mathbb{E}}^1\bigg[\frac{\partial^2 u}{\partial x \partial \mu}(X_t,\mu_t,t)(\widetilde{Y_t})a_t^0\widetilde{\gamma_t^{0}}\bigg]dt+\frac{\partial \psi^0}{\partial x}(X_t,\mu_t,t)a_t^0\mathrm{d}t\\
			&+\frac{\partial \psi}{\partial x}(X_t,\mu_t,t)a_t\mathrm{d}t+\widetilde{\mathbb{E}}^1\bigg[\frac{\partial \psi^0}{\partial \mu}(X_t,\mu_t,t)(\tilde{Y_t})\widetilde{\gamma_t^0}\bigg]\mathrm{d}t.
		\end{split}
	\end{align*}
	Note that $\mu_t(\omega^0)$ is a probability measure on $\Omega^1$. More precisely, it is the law of $Y_t$ when the common noise is realized.
\end{proposition}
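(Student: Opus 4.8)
The plan is to reduce the measure-dependent statement to the classical finite-dimensional Itô--Wentzell formula by means of the empirical projection of Proposition \ref{prop1}, and then to pass to the mean-field limit via conditional propagation of chaos. First I would set up an approximating system. Working conditionally on the common noise $\omega^0$, introduce $N$ conditionally i.i.d. copies $(Y^k)_{k=1}^N$ of $Y$ on $(\Omega^1,\mathcal{F}^1,\mathbb{P}^1)$, each driven by its own idiosyncratic Brownian motion $W^k$ but sharing the same common-noise motion $W^0$, so that
\[
	\mathrm{d}Y_t^k=\beta_t^k\,\mathrm{d}t+\gamma_t^k\,\mathrm{d}W_t^k+\gamma_t^{0,k}\,\mathrm{d}W_t^0.
\]
Let $\mu_t^N=\frac{1}{N}\sum_{k=1}^N\delta_{Y_t^k}$ be the associated empirical measure. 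By the theory of conditional propagation of chaos, $\mu_t^N\to\mu_t(\omega^0)=\mathcal{L}\big(Y_t(\omega^0,\cdot)\big)$ in $\mathcal{P}_2(\mathbb{R})$ as $N\to\infty$, $\mathbb{P}^0$-almost surely.

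Next I would apply the classical Itô--Wentzell formula in finite dimensions. Consider the field $U^{(N)}(x,y^1,\ldots,y^N,t):=u\big(x,\frac{1}{N}\sum_k\delta_{y^k},t\big)$, whose $\mathrm{d}t$, $\mathrm{d}W_t$ and $\mathrm{d}W_t^0$ coefficients are the empirical projections of $\phi,\psi,\psi^0$. Since $u$ is continuously L-differentiable, Proposition \ref{prop1} guarantees $U^{(N)}\in\mathcal{C}^2$, so the standard Itô--Wentzell formula applies to $U^{(N)}(X_t,Y_t^1,\ldots,Y_t^N,t)$. This already produces the purely spatial and temporal lines of the asserted identity: the $\phi,\psi,\psi^0$ terms, the first- and second-order $x$-derivative terms, and the Itô--Wentzell cross terms $\frac{\partial\psi^0}{\partial x}a_t^0$ and $\frac{\partial\psi}{\partial x}a_t$, together with analogous $y$-derivative terms that I must still rewrite.

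The heart of the argument is then to convert every $y$-derivative into an L-derivative via Proposition \ref{prop1} and to read off the correct noise structure in each covariation. Using $\frac{\partial U^{(N)}}{\partial y^i}=\frac{1}{N}\frac{\partial u}{\partial\mu}(\mu_t^N)(Y_t^i)$, the first-order terms $\sum_i\frac{\partial U^{(N)}}{\partial y^i}\,\mathrm{d}Y_t^i$ become empirical averages in which the idiosyncratic martingale parts $\gamma_t^i\,\mathrm{d}W_t^i$ vanish in the limit, leaving only the drift and the shared-noise contributions. For the second-order terms I would split the sum over $(i,j)$ into diagonal and off-diagonal parts. On the diagonal the self-covariation is $(|\gamma_t^i|^2+|\gamma_t^{0,i}|^2)\,\mathrm{d}t$, which together with the $\frac{1}{N}\frac{\partial^2 u}{\partial v\partial\mu}$ prefactor yields the $\frac{1}{2}\widetilde{\mathbb{E}}^1[\frac{\partial^2 u}{\partial v\partial\mu}(\widetilde{Y_t})(|\widetilde{\gamma_t}|^2+|\widetilde{\gamma_t^0}|^2)]$ term; off the diagonal the copies are conditionally independent, so $\mathrm{d}\langle Y^i,Y^j\rangle_t=\gamma_t^{0,i}\gamma_t^{0,j}\,\mathrm{d}t$, and paired with the $\frac{1}{N^2}\frac{\partial^2 u}{\partial\mu^2}$ prefactor this gives the double-copy expectation $\frac{1}{2}\widehat{\mathbb{E}}^1[\widetilde{\mathbb{E}}^1[\frac{\partial^2 u}{\partial\mu^2}(\widetilde{Y_t},\widehat{Y_t})\widetilde{\gamma_t^0}\widehat{\gamma_t^0}]]$. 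The same common-noise-only mechanism produces the mixed term $\widetilde{\mathbb{E}}^1[\frac{\partial^2 u}{\partial x\partial\mu}(\widetilde{Y_t})a_t^0\widetilde{\gamma_t^0}]$ from $\mathrm{d}\langle X,Y^i\rangle_t=a_t^0\gamma_t^{0,i}\,\mathrm{d}t$, and the field--measure cross term $\widetilde{\mathbb{E}}^1[\frac{\partial\psi^0}{\partial\mu}(\widetilde{Y_t})\widetilde{\gamma_t^0}]$ from the covariation between the common-noise volatility $\psi^0$ of $u$ and the measure flow. Letting $N\to\infty$, continuity of the L-derivatives in the measure argument combined with conditional propagation of chaos sends each single sum $\frac{1}{N}\sum_i$ to a single-copy expectation and the double sum $\frac{1}{N^2}\sum_{i,j}$ to $\widehat{\mathbb{E}}^1\widetilde{\mathbb{E}}^1$, while $\mu_t^N\to\mu_t$ inside every derivative; collecting the surviving terms reproduces the claimed dynamics $\mathbb{P}^0$-almost surely.

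The main obstacle I anticipate is making this passage to the limit rigorous, in particular justifying the interchange of the limit with the stochastic $\mathrm{d}W_t^0$ integrals and controlling the second-order measure term uniformly on $[0,T]$. This demands $L^2$-type estimates on the martingale increments that are uniform in $N$, which is exactly where the moment hypotheses $\mathbb{E}[\int_0^T|\gamma_s|^4+|\gamma_s^0|^4]<\infty$ and the assumed local boundedness of the L-derivatives enter: they suppress the idiosyncratic noise off the diagonal and ensure that the error incurred by replacing $\mu_t$ with $\mu_t^N$ in each derivative evaluation vanishes in the limit.
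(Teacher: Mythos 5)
The paper does not actually prove Proposition \ref{prop2}: it is imported wholesale from dos Reis and Platonov \cite{RP2}, so there is no internal proof to compare against. Your sketch is nevertheless a legitimate and essentially standard route to the result --- it is the empirical-projection / conditional-propagation-of-chaos argument that underlies the It\^o formula along a flow of conditional measures in \cite[Section 4.3]{CD2}, upgraded by running the classical It\^o--Wentzell formula on the finite-dimensional field $U^{(N)}$ so that the random-field cross terms $\frac{\partial\psi}{\partial x}a_t$, $\frac{\partial\psi^0}{\partial x}a_t^0$ and $\widetilde{\mathbb{E}}^1[\frac{\partial\psi^0}{\partial\mu}(\widetilde{Y_t})\widetilde{\gamma_t^0}]$ appear. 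Your bookkeeping of which covariations survive is correct, and it correctly explains why only common-noise products ($a_t^0\widetilde{\gamma_t^0}$, $\widetilde{\gamma_t^0}\widehat{\gamma_t^0}$) appear in the measure terms. Two points deserve emphasis. First, this collapse to common-noise-only covariations is not automatic from the statement (where $X$ and $Y$ are driven by the \emph{same} $W$): it requires replacing $Y$ by conditionally i.i.d.\ copies whose idiosyncratic noises are independent of $W$, which is licit only because $\mu_t$ depends on $Y$ solely through its conditional law; you do this implicitly but it should be said. Second, the entire analytical content of the theorem is the passage $N\to\infty$ inside the $\mathrm{d}W_t^0$ integrals and the uniform control of the second-order measure terms, which you correctly identify but defer; this is precisely what \cite{RP2} supplies under explicit joint (full $\mathcal{C}^2$) regularity and integrability hypotheses that the paper compresses into the phrase ``sufficient regularity and local boundedness.'' So your proposal is a sound blueprint for reproving the cited result rather than a complete proof, and the gap you flag is the genuine one.
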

  $(\widetilde{\Omega}^1,\widetilde{\mathcal{F}}^1,\widetilde{\mathbb{P}}^1)$, $(\widehat{\Omega}^1,\widehat{\mathcal{F}}^1,\widehat{\mathbb{P}}^1)$ are copy spaces of $(\Omega^1,\mathcal{F}^1,\mathbb{P}^1)$, and $(\widetilde{\Omega},\widetilde{\mathcal{F}},\widetilde{\mathbb{P}})$, $(\widehat{\Omega},\widehat{\mathcal{F}},\widehat{\mathbb{P}})$ are completion of the product space $(\Omega^0\times\widetilde{\Omega}^1,\mathcal{F}^0\otimes\widetilde{\mathcal{F}}^1,\mathbb{P}^0\otimes\widetilde{\mathbb{P}}^1)$, $(\Omega^0\times\widehat{\Omega}^1,\mathcal{F}^0\otimes\widehat{\mathcal{F}}^1,\mathbb{P}^0\otimes\widehat{\mathbb{P}}^1)$, respectively, and $(\widetilde{Y_t},\widetilde{\beta_t},\widetilde{\gamma_t},\widetilde{\gamma_t^{0}})$, $(\widehat{Y_t},\widehat{\beta_t},\widehat{\gamma_t},\widehat{\gamma_t^{0}})$ are independent copy processes of $({Y_t},{\beta_t},{\gamma_t},{\gamma_t^{0}})$ defined on copy spaces $(\widetilde{\Omega},\widetilde{\mathcal{F}},\widetilde{\mathbb{P}})$, $(\widehat{\Omega},\widehat{\mathcal{F}},\widehat{\mathbb{P}})$, respectively. $\widetilde{\mathbb{E}},\;\widehat{\mathbb{E}}$ denote the expectation acting on copy spaces $(\widetilde{\Omega},\widetilde{\mathcal{F}},\widetilde{\mathbb{P}})$, $(\widehat{\Omega},\widehat{\mathcal{F}},\widehat{\mathbb{P}})$, respectively. Note that $\widetilde{\mathbb{E}}^1$, $\widehat{\mathbb{E}}^1$ denote the conditional expectation given $\mathcal{F}^0$ depending on the probability space.

\section{CARA risk preferences}\label{sec3}
We investigate agents who have exponential risk preferences with random individual absolute risk tolerances and absolute competition weights. Each agent measures its relative performance, taking into account competition among agents, and using the difference between their own wealth and the average wealth of all agents as a benchmark.
\subsection{The $n$-agent game}\label{subsec3.1}
We introduce a game of $n$ agents competing with each other. The market consists of one riskless asset and $n$ risky assets. We assume asset specialization in which each agent $i$ can invest in her assigned risky asset $S^i$ and a common riskfree asset. We also assume that price of risky assets follows a log-normal distribution, and each risky asset $S^i$ is driven by two independent Brownian motions $W^i$ and $W^0$. Precisely, dynamics of the risky asset price $(S_t^i)_{t\geq0}$ traded exclusively by the $i$-th agent is given by
\begin{equation*}
	\frac{\mathrm{d}S_t^i}{S_t^i}=\breve{\mu}(S_t^i,t,\omega)\mathrm{d}t+\nu(S_t^i,t,\omega)\mathrm{d}W_t^i+\sigma(S_t^i,t,\omega)\mathrm{d}W_t^0,
\end{equation*}
where $\omega\in\Omega$, $\breve{\mu},\nu,\sigma$ is $\mathcal{F}$-progressively measurable processes with $\breve{\mu}>0,\;\sigma\geq0,\;\nu\geq0,\;\Sigma=\sigma^2+\nu^2>0$, $\mathbb{P}$-almost surely. When $\sigma>0$, the Brownian motion $W^0$ induces a correlation among risky assets, so we call $W^0$ the \textit{common noise} and $W^i$ an \textit{idiosyncratic noise}. Let $\mathcal{F}$-progressively measurable process $r(t,\omega)$ be the interest rate and denote $\mu=\breve{\mu}-r$. Then $\mu$ is an $\mathcal{F}$-progressively measurable process, and it can be interpreted as an excess return. For convenience, we omit the omega argument $\omega$ for the following description, and specify if it is to be considered.\\
We suppose that each agent $i\in\{1,\ldots,n\}$ trades using a self-financing strategy $(\pi_t^i)_{t\geq0}$, which represents the (discounted by the bond) amount invested in the $i$-th risky asset. Then dynamics of the $i$-th agent's wealth process $(X_t^i)_{t\geq0}$ is given by
\begin{equation}\label{eq:1}
	\mathrm{d}X_t^i=\pi_t^i\big(\mu_t^i\mathrm{d}t+\nu_t^i\mathrm{d}W_t^i+\sigma_t^i\mathrm{d}W_t^0\big),\;\;X_0^i=x_0^i\in\mathbb{R},
\end{equation}
where
\begin{equation*}
	\mu_t^i=\mu(X_t^i,t),\;\nu_t^i=\nu(X_t^i,t),\;\sigma_t^i=\sigma(X_t^i,t),\;\pi_t^i=\pi\big(X_t^i,\frac{1}{n}\sum_{k=1}^n \delta_{X_t^k},t\big).
\end{equation*}
Next, we define the admissibility set $\mathcal{A}^i$ for the agent $i$ to be the collection of $\mathcal{F}$-progressively measurable processes $(\pi_t^i)_{t\geq0}$, such that
\begin{equation*}
	\mathbb{E}\big[\int_0^t\vert\pi_s^i\mu_s^i\vert^2ds\big]<\infty,\;\mathbb{E}\big[\int_0^t\vert\pi_s^i\nu_s^i\vert^4ds\big]<\infty,\;\mathbb{E}\big[\int_0^t\vert\pi_s^i\sigma_s^i\vert^4ds\big]<\infty,
\end{equation*}
 for any $t>0$. We say that a portfolio strategy $\pi_t^i$ for the agent $i$ is \textit{admissible} if it belongs to $\mathcal{A}^i$.\\

Considering competition among agents in order to measure a relative performance, we introduce a stochastic process $(Y_t^n)_{t\geq0}$ by
\begin{equation*}
	Y_t^n=h(\overline{M}_t^n),
\end{equation*} where
\begin{equation*}
	\overline{M}_t^n=\frac{1}{n}\sum_{k=1}^n \delta_{X_t^k},
\end{equation*}
and
\begin{equation*}
	h(\mu)=\int_\mathbb{R} x\;\mathrm{d}\mu(x).
\end{equation*}
 Then $Y_t^n=\frac{1}{n}\sum_{k=1}^nX_t^k$, and it is the arithmetic average wealth of all agents. We can calculate the dynamics of $Y_t^n$ in two ways. Firstly, we can derive it from the general method $\mathrm{d}Y_t^n=\frac{1}{n}\sum_{k=1}^n\mathrm{d}X_t^k$. Secondly, we can use the empirical projection $h^{(n)}(X_t^1,\ldots,X_t^n)$ and derive it from $\mathrm{d}Y_t^n=\mathrm{d}h^{(n)}(X_t^1,\ldots,X_t^n)$. After some calculations using Proposition \ref{prop1}, the dynamics of $Y_t^i$ is given by
\begin{equation*}
	\mathrm{d}Y_t^n=(\overline{\pi\mu})_t\mathrm{d}t+\frac{1}{n}\sum_{k=1}^n\pi_t^k\nu_t^k\mathrm{d}W_t^k+(\overline{\pi\sigma})_t\mathrm{d}W_t^0,
\end{equation*}
where we define the auxiliary quantities 
\begin{equation*}
	(\overline{\pi\mu})_t=\frac{1}{n}\sum_{k=1}^n\pi_t^k\mu_t^k,
\end{equation*}
\begin{equation*}
	(\overline{\pi\sigma})_t=\frac{1}{n}\sum_{k=1}^n\pi_t^k\sigma_t^k.
\end{equation*}
We define a stochastic process $\alpha^n:\Omega^0\times[0,\infty)\to\mathcal{P}_2(\mathbb{R})$ by
\begin{equation*}
	\alpha_t^n(\omega^0)=\mathcal{L}\big(Y_t^n(\omega^0,\cdot)\big).
\end{equation*} 

\begin{remark}
	Similary to \cite[Remark 2.5]{LZ1}, It is more natural to replace the average wealth $Y_t^n$ with the average over all other agents. If we define $Y_t^i$ by
	\begin{equation*}
		Y_t^i=p_n(\overline{M}_t^n)+q_n(X_t^i)
	\end{equation*}
	where
	\begin{equation*}
		p_n(\mu)=\int_\mathbb{R} \frac{n}{n-1}x\;\mathrm{d}\mu(x)
	\end{equation*}
	and
	\begin{equation*}
		q_n(x)=-\frac{1}{n-1}x,
	\end{equation*}
	then we can obtain the corresponding results for that case.
\end{remark}

 We assume that the $i$-th agent's utility is a random field $U^i:\Omega\times(0,\infty)\times\mathcal{P}_2(\mathbb{R})\times[0,\infty)\to\mathbb{R}$ such that
 \begin{equation}\label{eq:2}
 	U^i(x,\mu,t)=-\exp\Big[-\frac{1}{\delta_t^i}\big(x-\theta_t^i \lambda(\mu)\big)+K_t^i\Big],
 \end{equation}
 where $K_t^i$ is an $\mathcal{F}$-progressively measurable process that is differentiable in time with $K_0^i=0$, and we define
 \begin{equation*}
 	\lambda(\mu)=\int_\mathbb{R}x\;\mathrm{d}\mu(x).
 \end{equation*}
Here, the random parameters satisfy the conditions $\delta_t^i=\delta(X_t^i,t)>0$, $\theta_t^i=\theta(X_t^i,t)\in[0,1]$, $\mathbb{P}$-almost surely, and they represent the $i$-th agent's absolute risk tolerance and absolute competition weight. Note that
\begin{equation*}
	U^i(X_t^i,\alpha_t^n,t)=-\exp\Big[-\frac{1}{\delta_t^i}\big(X_t^i-\theta_t^i\overline{X}_t\big)+K_t^i\Big]
\end{equation*}
where
\begin{equation*}
	\overline{X}_t=\frac{1}{n}\sum_{k=1}^n X_t^k,
\end{equation*}
which is almost the same as the CARA exponential utility form used in \cite{LZ1,RP1}.\\
Note that
\begin{equation*}
	X_t^i-\theta_t^i\overline{X}_t=(1-\theta_t^i)X_t^i+\theta_t^i(X_t^i-\overline{X}_t).
\end{equation*}
Thus, the smaller value of $\theta_t^i$, the relative performance becomes less relevant.\\

For each agent $i$, we also assume that the Itô-decomposition of the utility is given by
\begin{equation}\label{eq:3}
	\mathrm{d}U^i(x,\mu,t)=\frac{\partial U^i}{\partial t}(x,\mu,t)\mathrm{d}t.
\end{equation}
We may assume that the volatility of the utility is zero since we only work in a lognormal market.\\

We can check that partial derivatives and L-derivatives of $U^i$ exist and they are given in the following Proposition.
\begin{proposition}\label{prop3}
	For each agent $i\in\{1,\ldots,n\}$, the utility $U^i$ has the partial derivatives and L-derivatives given as follows:
	\begin{align*}
		\frac{\partial U^i}{\partial t}(x,\mu,t)&=\frac{\mathrm{d}K_t^i}{\mathrm{d}t}U^i(x,\mu,t),\\
		\frac{\partial U^i}{\partial x}(x,\mu,t)&=-\frac{1}{\delta_t^i}U^i(x,\mu,t),\\
		\frac{\partial^2U^i}{\partial x^2}(x,\mu,t)&=\frac{1}{(\delta_t^i)^2}U^i(x,\mu,t),\\
		\frac{\partial U^i}{\partial \mu}(x,\mu,t)(v)&=\frac{\theta_t^i}{\delta_t^i}U^i(x,\mu,t),\\
		\frac{\partial^2U^i}{\partial v \partial \mu}(x,\mu,t)(v)&=0,\\
		\frac{\partial^2U^i}{\partial \mu^2}(x,\mu,t)(v,v')&=\Big(\frac{\theta_t^i}{\delta_t^i}\Big)^2U^i(x,\mu,t),\\
		\frac{\partial^2U^i}{\partial x \partial \mu}(x,\mu,t)(v)&=-\frac{\theta_t^i}{(\delta_t^i)^2}U^i(x,\mu,t).
	\end{align*}
\end{proposition}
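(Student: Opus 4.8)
The plan is to treat Proposition \ref{prop3} as a sequence of chain-rule computations, exploiting two structural features of \eqref{eq:2}: first, that $U^i$ is the exponential $-e^{g}$ of the affine exponent
\begin{equation*}
g(x,\mu,t)=-\frac{1}{\delta_t^i}\big(x-\theta_t^i\lambda(\mu)\big)+K_t^i=-\frac{1}{\delta_t^i}x+\frac{\theta_t^i}{\delta_t^i}\lambda(\mu)+K_t^i,
\end{equation*}
and second, that the measure argument enters $U^i$ only through the linear mean functional $\lambda(\mu)=\int_\mathbb{R}y\,\mathrm{d}\mu(y)$. Because every derivative of an exponential reproduces the exponential itself, each claimed identity will have the form (constant prefactor)$\times U^i$, and the whole proof reduces to reading off the correct coefficient in each slot.

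First I would dispose of the spatial and temporal derivatives, which are routine. Since $\partial_x g=-\tfrac{1}{\delta_t^i}$ and $\delta_t^i$ does not depend on the slot $x$, the chain rule gives $\partial_x U^i=(\partial_x g)\,U^i=-\tfrac{1}{\delta_t^i}U^i$, and iterating yields $\partial_{xx}U^i=-\tfrac{1}{\delta_t^i}\partial_x U^i=\tfrac{1}{(\delta_t^i)^2}U^i$. For the time derivative I invoke the standing assumption \eqref{eq:3}: the Itô evolution of $U^i$ in $t$ for frozen $(x,\mu)$ has zero volatility, and with $\delta_t^i,\theta_t^i$ held at their current values the only surviving explicit time dependence is through the finite-variation process $K_t^i$, so differentiating $e^{K_t^i}$ produces $\partial_t U^i=\tfrac{\mathrm{d}K_t^i}{\mathrm{d}t}U^i$.

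The main step, and the only genuinely non-elementary one, is the L-derivative in $\mu$. Here I would pass to the lifting: for $X\in L^2(\Omega,\mathcal{F},\mathbb{P};\mathbb{R})$ with $\mathcal{L}(X)=\mu$, the lifted functional is $\widetilde{U}^i(x,X,t)=-\exp\big[-\tfrac{1}{\delta_t^i}x+\tfrac{\theta_t^i}{\delta_t^i}\mathbb{E}[X]+K_t^i\big]$. The key observation is that $X\mapsto\mathbb{E}[X]$ is a bounded linear functional on the Hilbert space $L^2$, so by Riesz representation its Fréchet derivative in a direction $\xi$ is $\mathbb{E}[\xi]=\langle\mathbbm{1},\xi\rangle_{L^2}$; equivalently, the L-derivative of the mean functional is the constant map $\tfrac{\partial\lambda}{\partial\mu}(\mu)(v)=1$, which is continuous in $X$, so $\lambda$ — and hence $U^i$ — is continuously L-differentiable. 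Composing through the exponential then yields $\tfrac{\partial U^i}{\partial\mu}(x,\mu,t)(v)=\tfrac{\theta_t^i}{\delta_t^i}U^i$, independent of $v$.

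Finally, the three second-order measure derivatives follow by differentiating this expression once more in the appropriate slot, using that the prefactor $\theta_t^i/\delta_t^i$ is constant in both $\mu$ and $v$: the classical $v$-derivative of the $v$-independent map $\tfrac{\theta_t^i}{\delta_t^i}U^i$ is $0$, giving $\partial_v\partial_\mu U^i=0$; L-differentiating $\tfrac{\theta_t^i}{\delta_t^i}U^i$ again in $\mu$ pulls out a second factor and reproduces the L-derivative, giving $\partial_{\mu\mu}U^i=(\theta_t^i/\delta_t^i)^2U^i$; and the $x$-derivative of $\tfrac{\theta_t^i}{\delta_t^i}U^i$ uses $\partial_x U^i=-\tfrac{1}{\delta_t^i}U^i$ to give $\partial_x\partial_\mu U^i=-\tfrac{\theta_t^i}{(\delta_t^i)^2}U^i$. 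I expect the only delicate point to be this measure derivative — specifically, justifying the lifting/Fréchet computation and the continuity of the gradient that licenses both the chain rule and the subsequent use of Proposition \ref{prop1}; everything else is bookkeeping around the exponential.
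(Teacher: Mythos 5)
Your proposal is correct and follows essentially the same route as the paper: both reduce everything to the single fact that $\frac{\partial\lambda}{\partial\mu}(\mu)(v)=1$ and then propagate constant prefactors through the exponential by the chain rule. The only difference is that you derive this fact explicitly via the lifting and the Riesz representation of $X\mapsto\mathbb{E}[X]$, whereas the paper simply cites the standard result that the L-derivative of $\mu\mapsto\int_{\mathbb{R}}f\,\mathrm{d}\mu$ is $f'$ when $f'$ has at most linear growth.
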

\begin{proof}[Proof]
	It is known that the L-derivative of $u(\mu)$ of the form $\int_{\mathbb{R}}f(x)\mathrm{d}\mu(x)$ is given by $\frac{\partial f}{\partial x}$ if $\frac{\partial f}{\partial x}(x)$ is at most of linear growth. See \cite[Section 5.2.2]{CD1}.\\
	Since $\frac{\partial \lambda}{\partial \mu}(\mu)(v)=1$, it implies that $\frac{\partial U^i}{\partial \mu}(x,\mu,t)(v)=\frac{\theta_t^i}{\delta_t^i}U^i(x,\mu,t)$. Note that, it does not depend on the new variable $v$, and we automatically obtain $\frac{\partial^2U^i}{\partial v \partial \mu}(x,\mu,t)$=0. Other derivatives can be obtained using the usual differentiation method.
\end{proof}

We now refer the concept of CARA-\textit{forward relative performance} in \cite{RP1}, and extend it to the measure dependent forward relative performance process.
\begin{definition}(CARA-forward relative performance for the agent)\label{def3}\\
	We assume that all agents have CARA risk preferences. For each agent $i$, we suppose that a strategy for the other agent $j$, $\pi_t^j$, is arbitrary but fixed, for all $j\neq i$. The $\mathcal{F}$-progressively measurable random field $U^i:\Omega\times(0,\infty)\times\mathcal{P}_2(\mathbb{R})\times[0,\infty)\to\mathbb{R}$ is a forward relative performance for the agent $i$ if, for all $t\geq0$, the followings hold:
	\begin{enumerate}
		\item $U^i(x,\mu,t)$ is $\mathbb{P}$-a.s. strictly increasing and strictly concave in $x$.
		\item For any $\pi_t^i\in\mathcal{A}^i$, $U^i(X_t^i,\alpha_t^n,t)$ is a $\mathbb{P}^1$-(local) supermartingale $\mathbb{P}^0$-a.s.
		\item There exists $\pi_t^{i,*}\in\mathcal{A}^i$, such that $U^i(X_t^{i,*},\alpha_t^n,t)$ is a $\mathbb{P}^1$-(local) martingale $\mathbb{P}^0$-a.s. where $X_t^{i,*}$ satisfies (\ref{eq:1}) with a strategy $\pi_t^{i,*}$.
	\end{enumerate}
\end{definition}

The above definition assumes the optimal strategy is attained. Actually, it strongly depends on the initial condition $U^i(x,\mu,0)$. Thus, the analysis of an admissible initial condition is also valuable as a future research.\\
Contrary to the backward approach based on the classical expected utility maximization case, the forward relative performance process is a manager-specific input. Once it is chosen, the supermatingale and martingale conditions induce the forward stochastic partial differential equation, with enough regularity. See \cite{MZ3,NM1}. In our case, it reduces to an ODE with stochastic coefficients.

\begin{remark}
	We can also extend the concept of {forward performance process} in \cite{MZ2} to the measure dependent forward performance process. It is defined by substituting $\delta_{0}$ instead of $\alpha_t^n$ in the argument of the utility in Definition \ref{def3}. Here, we denote $\delta_0$ as a Dirac point mass at $0\in\mathbb{R}$.
\end{remark}

We compare the previous study with our study in the below table \ref{tab1}. Specifically, we compare the form of the forward performance process and the forward relative performance used in each study and stochastic process which is analyzed to consider competition.\\

\begin{table}[h]
	\begin{center}
		\begin{tabular} {ccccc}
			\hline\hline\addlinespace[0.1cm]
			& &\textbf{Previous study} & \textbf{Our study} &\\[0.1cm]
			\hline\addlinespace[0.2cm]
			& \textbf{Forward performance process}      & $U^i(X_t^i,t)$ & $U^i(X_t^i,\delta_0,t)$ & \\[0.2cm]
			\hline\addlinespace[0.2cm]
			& \textbf{Forward relative performance process} & $U^i(X_t^i-\theta_t^i\overline{X}_t,t)$ &     $U^i(X_t^i,\alpha_t^n,t)$ &  \\[0.2cm]
			\hline\hline
		\end{tabular}
	\end{center}
	\caption[Comparison of the forward relative performance process formula]{Comparison of the forward relative performance process formula}
	\label{tab1}
\end{table}

When the other agents' strategies are arbitrary but fixed, we can reply to this strategy to obtain the agent's own optimal investment policy. Here we derive the random ODE which the $K_t^i$ satisfy in order for $U^i$ to be a forward relative performance, and investigate the optimal strategy for each agent.\\

\begin{proposition}\label{prop4} (Best responses)\\
	Fix $i\in\{1,\ldots,n\}$ and suppose that $\pi_t^j$ for the other agents $j\neq i$ are arbitrary but fixed. We assume that $U^i$, the utility of agent $i$, satisfies conditions (\ref{eq:2}), (\ref{eq:3}) for given $K_t^i$. Consider the random ODE,
	\begin{equation}\label{eq:4}
		\frac{\mathrm{d}K_t^i}{\mathrm{d}t}=-\frac{\theta_t^i}{\delta_t^i}\mathbb{E}^1[(\overline{\pi\mu})_t]-\frac{1}{2}\Big(\frac{\theta_t^i}{\delta_t^i}\Big)^2\frac{(\nu_t^i)^2}{\Sigma_t^i}\mathbb{E}^1[(\overline{\pi\sigma})_t]^2+\frac{\theta_t^i}{2\delta_t^i}\frac{\mu_t^i\sigma_t^i}{\Sigma_t^i}\mathbb{E}^1[(\overline{\pi\sigma})_t]+\frac{(\mu_t^i)^2}{2\Sigma_t^i},
	\end{equation}
	and define the strategy $\pi_t^{i,*}$ by
	\begin{equation}\label{eq:5}
		\pi_t^{i,*}=\frac{1}{\Sigma_t^i}\Big(\mu_t^i\delta_t^i+\theta_t^i\sigma_t^i\mathbb{E}^1\big[(\overline{\pi\sigma})_t\big]\Big),\;\; t\geq0.
	\end{equation}
	If $K_t^i$ satisfies (\ref{eq:4}) $\mathbb{P}^0$-a.s., then $U^i(x,\mu,t)$ is a forward relative performance process, and the policy $\pi_t^{i,*}$ is an optimal strategy for an agent $i$.
\end{proposition}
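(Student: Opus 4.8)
The plan is to verify the three requirements of Definition \ref{def3} in turn: the first is immediate, and the other two reduce to a single drift computation followed by a pointwise quadratic optimization. First I would dispose of condition (1). By Proposition \ref{prop3}, $\frac{\partial U^i}{\partial x}=-\frac{1}{\delta_t^i}U^i$ and $\frac{\partial^2 U^i}{\partial x^2}=\frac{1}{(\delta_t^i)^2}U^i$; since $U^i<0$ and $\delta_t^i>0$ $\mathbb{P}$-a.s., the former is strictly positive and the latter strictly negative, so $U^i(\cdot,\mu,t)$ is strictly increasing and strictly concave for every $\mu,t$, $\mathbb{P}$-a.s. This holds irrespective of $K_t^i$.

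The core of the argument is to expand $U^i(X_t^i,\alpha_t^n,t)$ by the It\^o--Wentzell--Lions formula (Proposition \ref{prop2}), taking $X_t=X_t^i$ with $b_t=\pi_t^i\mu_t^i$, $a_t=\pi_t^i\nu_t^i$, $a_t^0=\pi_t^i\sigma_t^i$, and $Y_t=Y_t^n$ with $\beta_t=(\overline{\pi\mu})_t$, $\gamma_t^0=(\overline{\pi\sigma})_t$, and the idiosyncratic volatility read off from $\mathrm{d}Y_t^n$. Two simplifications I would exploit at once: by the zero-volatility assumption (\ref{eq:3}) we have $\psi\equiv\psi^0\equiv0$, which annihilates the last three drift terms of Proposition \ref{prop2}; and by Proposition \ref{prop3} we have $\frac{\partial^2 U^i}{\partial v\partial\mu}\equiv0$, which removes the term carrying $|\widetilde{\gamma_t}|^2+|\widetilde{\gamma_t^0}|^2$. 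Because each surviving L-derivative of $U^i$ is a deterministic multiple of $U^i$ that does not depend on the dummy variable, the expectations over the copy spaces collapse to the conditional expectations $\mathbb{E}^1[(\overline{\pi\mu})_t]$ and $\mathbb{E}^1[(\overline{\pi\sigma})_t]$; these appear precisely because $\alpha_t^n$ is the law of $Y_t^n$ given the common noise. Collecting terms, I expect the finite-variation part to take the form $U^i(X_t^i,\alpha_t^n,t)\,G_t^i\,\mathrm{d}t$, where $G_t^i$ equals $\frac{\mathrm{d}K_t^i}{\mathrm{d}t}$ minus a quadratic polynomial in $\pi_t^i$ with positive leading coefficient $\frac{\Sigma_t^i}{2(\delta_t^i)^2}$.

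Since $U^i<0$, the sign of the drift is opposite to that of $G_t^i$, so condition (2) is equivalent to $G_t^i\ge0$ for every admissible $\pi_t^i$ and condition (3) to the existence of $\pi_t^{i,*}$ with $G_t^i=0$. As $G_t^i$ is an upward parabola in $\pi_t^i$, I would complete the square: its unique minimizer is exactly the candidate (\ref{eq:5}), and substituting back, the minimal value is $\frac{\mathrm{d}K_t^i}{\mathrm{d}t}$ minus an explicit expression in $\mu_t^i,\sigma_t^i,\nu_t^i,\delta_t^i,\theta_t^i,\mathbb{E}^1[(\overline{\pi\mu})_t],\mathbb{E}^1[(\overline{\pi\sigma})_t]$. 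Using $\Sigma_t^i=(\sigma_t^i)^2+(\nu_t^i)^2$ to simplify the coefficient of $\mathbb{E}^1[(\overline{\pi\sigma})_t]^2$, this minimal value vanishes if and only if $K_t^i$ solves (\ref{eq:4}). Hence, whenever (\ref{eq:4}) holds $\mathbb{P}^0$-a.s., $G_t^i\ge0$ for all $\pi_t^i$ and $G_t^i=0$ at $\pi_t^{i,*}$, which is exactly conditions (2) and (3) at the level of drifts.

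The delicate points, rather than the algebra, are where I expect the real work. The hardest step is the faithful application of Proposition \ref{prop2}: keeping track of the common-noise measure-derivative terms and confirming that they reduce to the conditional expectations claimed. A second subtlety is that $\alpha_t^n$, and therefore both $\mathbb{E}^1[(\overline{\pi\mu})_t]$ and $\mathbb{E}^1[(\overline{\pi\sigma})_t]$, depends on the agent's own control through the $1/n$ weight in the empirical average, so the pointwise minimization in $\pi_t^i$ must account for this self-interaction; tracking it correctly is what pins down the exact coefficients appearing in (\ref{eq:4}) and (\ref{eq:5}). Finally, the computation above controls only the drift, so to upgrade the resulting local statements to genuine $\mathbb{P}^1$-(super)martingales $\mathbb{P}^0$-a.s. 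I would invoke the integrability built into $\mathcal{A}^i$, which makes the $\mathrm{d}W_t^i$ and $\mathrm{d}W_t^0$ stochastic integrals true martingales and so justifies the local-to-true passage.
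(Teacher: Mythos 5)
Your proposal follows essentially the same route as the paper's proof: expand $U^i(X_t^i,\alpha_t^n,t)$ via the It\^o--Wentzell--Lions formula, substitute the derivatives of Proposition \ref{prop3} (with the $\partial^2_{v\mu}U^i\equiv0$ and zero-volatility simplifications), recognize the drift as a quadratic in $\pi_t^i$ with sign governed by $U^i<0$, and complete the square to obtain (\ref{eq:5}) and the ODE (\ref{eq:4}); the paper records the result of this completion as the drift collapsing to $\frac{\Sigma_t^i}{2(\delta_t^i)^2}U^i\vert\pi_t^i-\pi_t^{i,*}\vert^2\le0$. Your additional remarks (explicit verification of monotonicity/concavity, the localization issue, and the $1/n$ self-interaction of $\pi_t^i$ inside $\mathbb{E}^1[(\overline{\pi\sigma})_t]$ --- which the paper in fact treats as exogenous when minimizing) are sensible refinements but do not change the argument.
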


\begin{proof}[Proof]
	Since we deals with a measure dependent forward relative performance process, we use the Itô-Wentzell-Lions formula in Proposition \ref{prop2}. Then we have $\mathbb{P}^0$-a.s.,
	\begin{align}\label{eq:6}
		\begin{split}
			\mathrm{d}U^i(X_t^i,\alpha_t^n,t)=\;&\frac{\partial U^i}{\partial t}(X_t^i,\alpha_t^n,t)\mathrm{d}t+\frac{\partial U^i}{\partial x}(X_t^i,\alpha_t^n,t)\pi_t^i\big(\mu_t^i\mathrm{d}t+\nu_t^i\mathrm{d}W_t^i+\sigma_t^i\mathrm{d}W_t^0\big)\\
			&+\frac{1}{2}\frac{\partial^2U^i}{\partial x^2}(X_t^i,\alpha_t^n,t)(\pi_t^i)^2\Sigma_t^i\mathrm{d}t\\
			&+\widetilde{\mathbb{E}}^1\bigg[\frac{\partial U^i}{\partial \mu}(X_t^i,\alpha_t^n,t)(\widetilde{Y}_t^n)\overline{(\widetilde{\pi\mu})}_t\bigg]\mathrm{d}t\\
			&+\widetilde{\mathbb{E}}^1\bigg[\frac{\partial U^i}{\partial \mu}(X_t^i,\alpha_t^n,t)(\widetilde{Y}_t^n)\overline{(\widetilde{\pi\sigma})}_t\bigg]\mathrm{d}W_t^0\\
			&+\frac{1}{2}\widetilde{\mathbb{E}}^1\bigg[\frac{\partial^2U^i}{\partial v \partial \mu}(X_t^i,\alpha_t^n,t)(\widetilde{Y}_t^n)\Big(\frac{1}{n^2}\sum_{k=1}^n\big(\tilde{\pi}_t^k\tilde{\nu}_t^k\big)^2+\big(\overline{(\widetilde{\pi\sigma})}_t\big)^2\Big)\bigg]\mathrm{d}t\\
			&+\frac{1}{2}\widehat{\mathbb{E}}^1\bigg[\widetilde{\mathbb{E}}^1\bigg[\frac{\partial^2U^i}{\partial \mu^2}(X_t^i,\alpha_t^n,t)(\widetilde{Y}_t^n,\widehat{Y}_t^n)\overline{(\widetilde{\pi\sigma})}_t\overline{(\widehat{\pi\sigma})}_t\bigg]\bigg]\mathrm{d}t\\
			&+\widetilde{\mathbb{E}}^1\bigg[\frac{\partial^2U^i}{\partial x \partial \mu}(X_t^i,\alpha_t^n,t)(\widetilde{Y}_t^n)\pi_t^i\sigma_t^i\overline{(\widetilde{\pi\sigma})}_t\bigg]\mathrm{d}t.
		\end{split}
	\end{align}
	Substituting the derivatives in Proposition \ref{prop3} into (\ref{eq:6}), we obtain $\mathbb{P}^0$-a.s.,
	
\begin{align}\label{eq:7}
	\begin{split}
		\frac{\mathrm{d}U^i(X_t^i,\alpha_t^n,t)}{U^i(X_t^i,\alpha_t^n,t)}=\;&\bigg[\frac{\Sigma_t^i}{2(\delta_t^i)^2}(\pi_t^i)^2-\frac{1}{(\delta_t^i)^2}\Big(\mu_t^i\delta_t^i+\theta_t^i\sigma_t^i\mathbb{E}^1[(\overline{\pi\sigma})_t]\Big)\pi_t^i\\
		&\qquad\qquad\qquad+\frac{\theta_t^i}{\delta_t^i}\Big(\mathbb{E}^1[(\overline{\pi\mu})_t]+\frac{\theta_t^i}{2\delta_t^i}\mathbb{E}^1[(\overline{\pi\sigma})_t]^2\Big)+\frac{\mathrm{d}K_t^i}{\mathrm{d}t}\bigg]\mathrm{d}t\\
		&-\frac{1}{\delta_t^i}\pi_t^i\nu_t^i\mathrm{d}W_t^i-\frac{1}{\delta_t^i}\Big(\pi_t^i\sigma_t^i-\theta_t^i\mathbb{E}^1[(\overline{\pi\sigma})_t]\Big)\mathrm{d}W_t^0.
	\end{split}
&\end{align}
For the process $U^i(x,\mu,t)$ to be a forward relative performance, a value of the drift term in (\ref{eq:7}) should be less than or equal to 0 for all $\pi_t^i\in\mathcal{A}^i$, and 0 for an optimal strategy. Note that the drift term in (\ref{eq:7}) is a quadratic function with respect to $\pi_t^i$ where the quadratic coefficient is negative. If $K_t^i$ satisfies (\ref{eq:4}) and we define $\pi_t^{i,*}$ as in (\ref{eq:5}), then we can simplify the drift term to $\frac{\Sigma_t^i}{2(\delta_t^i)^2}U^i\vert \pi_t^i-\pi_t^{i,*}\vert^2$. Thus, we conclude that $\pi_t^{i,*}$ is an optimal strategy and $U^i(x,\mu,t)$ is a forward relative performance for agent $i$.
\end{proof}

We refer the concept of CARA-\textit{forward Nash equilibrium} in \cite{RP1}, and extend it to the measure dependent forward relative performance process.

\begin{definition}(CARA-forward Nash equilibrium)\label{def4}
	We assume that each agent $i$ has CARA risk preference $U^i:\Omega\times(0,\infty)\times\mathcal{P}_2(\mathbb{R})\times[0,\infty)\to\mathbb{R}$, $1\leq i\leq n$. A forward Nash equilibrium consists of $n$-pairs of $\mathcal{F}$-progressively measurable random fields $\big(U^i,\pi^{i,*}\big)$, $1\leq i\leq n$, such that, for any $t\geq0$, the following conditions hold:
	\begin{enumerate}
		\item For each $i$, $\pi_t^{i,*}\in\mathcal{A}^i$.
		\item For each $i$, $U^i(x,\mu,t)$ is $\mathbb{P}$-a.s. strictly increasing and strictly concave in $x$.
		\item Let agents $j\neq i$ invest along the strategy $\pi_t^{j,*}$ and agent $i$ invests along the strategy $\pi_t^i\in\mathcal{A}^i$. Then $U^i(X_t^i,\alpha_t^n,t)$ is a $\mathbb{P}^1$-(local) supermartingale $\mathbb{P}^0$-a.s.
		\item Let all agents $i$ invest along the strategy $\pi_t^{i,*}$ and $X_t^{i,*}$ be the associated wealth process and $\alpha_t^{n,*}$ be the associated conditional marginal distribution. Then $U^i(X_t^{i,*},\alpha_t^{n,*},t)$ is a $\mathbb{P}^1$-(local) martingale $\mathbb{P}^0$-a.s.
	\end{enumerate}
\end{definition}

We now present the first main result, in which we find the forward Nash equilibrium when agents have CARA risk preferences. We solve the simultaneous best response as to construct the forward Nash equilibrium.

\begin{theorem}\label{thm5}
	Assume the conditions of Proposition \ref{prop4} hold for all agents $i\in\{1,\ldots,n\}$. Define the random quantities
	\begin{equation*}
		\varphi_n^\sigma(t)=\mathbb{E}^1\Big[\frac{1}{n}\sum_{k=1}^n\frac{\mu_t^k\delta_t^k\sigma_t^k}{\Sigma_t^k}\Big],
	\end{equation*}
	\begin{equation*}
		\psi_n^\sigma(t)=\mathbb{E}^1\Big[\frac{1}{n}\sum_{k=1}^n\frac{\theta_t^k(\sigma_t^k)^2}{\Sigma_t^k}\Big].
	\end{equation*}
	If $\psi_n^\sigma(t)\neq 1$ $\mathbb{P}^0$-a.s. for every $t\geq0$, then there exists an optimal strategy $\pi_t^{i,*}$ defined by
	\begin{equation*}
		\pi_t^{i,*}=\frac{1}{\Sigma_t^i}\Big(\mu_t^i\delta_t^i+\theta_t^i\sigma_t^i\frac{\varphi_n^\sigma(t)}{1-\psi_n^\sigma(t)}\Big),\;t\geq0.
	\end{equation*}
	Furthermore, let
	\begin{equation}\label{eq:8}
		K_t^i=-\displaystyle\int_0^t\frac{\theta_s^i}{\delta_s^i}\mathbb{E}^1[(\overline{\pi\mu})_s]+\frac{1}{2}\Big(\frac{\theta_s^i}{\delta_s^i}\Big)^2\frac{(\nu_s^i)^2}{\Sigma_s^i}\mathbb{E}^1[(\overline{\pi\sigma})_s]^2-\frac{\theta_s^i}{2\delta_s^i}\frac{\mu_s^i\sigma_s^i}{\Sigma_s^i}\mathbb{E}^1[(\overline{\pi\sigma})_s]-\frac{(\mu_s^i)^2}{2\Sigma_s^i}\mathrm{d}s
	\end{equation}
	where
	\begin{equation*}
		\mathbb{E}^1[(\overline{\pi\sigma})_t]=\frac{\varphi_n^\sigma(t)}{1-\psi_n^\sigma(t)},
	\end{equation*}
	\begin{equation*}
		\mathbb{E}^1[(\overline{\pi\mu})_t]=\mathbb{E}^1\Big[\frac{1}{n}\sum_{k=1}^n\frac{(\mu_t^k)^2\delta_t^k}{\Sigma_t^k}\Big]+\mathbb{E}^1\Big[\frac{1}{n}\sum_{k=1}^n\frac{\mu_t^k\theta_t^k\sigma_t^k}{\Sigma_t^k}\Big]\frac{\varphi_n^\sigma(t)}{1-\psi_n^\sigma(t)}.
	\end{equation*}
	 Then $n$-pairs of $\big(U^{i,*},\pi^{i,*}\big)$ are a forward Nash equilibrium, where $U^{i,*}$ satisfies (\ref{eq:3}) with $X_t^{i,*},\;\alpha_t^{n,*},\;K_t^i$.
\end{theorem}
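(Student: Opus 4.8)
The plan is to build the equilibrium by imposing the simultaneous best-response condition of Proposition \ref{prop4} and then resolving the resulting fixed-point equation for the single aggregate quantity through which the agents interact. The crucial structural observation is that the best-response formula (\ref{eq:5}) depends on the other agents only through the scalar $\mathbb{E}^1[(\overline{\pi\sigma})_t]$. Hence at a Nash equilibrium every agent $k$ plays $\pi_t^{k,*}$ of the form (\ref{eq:5}) driven by one common value of $\mathbb{E}^1[(\overline{\pi\sigma})_t]$, and self-consistency demands that this value, when recomputed from the aggregate of all the $\pi_t^{k,*}$, reproduce itself.

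Concretely, I would multiply (\ref{eq:5}) by $\sigma_t^i$, average over $i\in\{1,\dots,n\}$, and apply the conditional expectation $\mathbb{E}^1$. Since $\mathbb{E}^1[(\overline{\pi\sigma})_t]$ is $\mathcal{F}^0$-measurable, it factors out of the conditional expectation, so that the aggregate satisfies the scalar affine relation
\[
	\mathbb{E}^1[(\overline{\pi\sigma})_t]=\varphi_n^\sigma(t)+\psi_n^\sigma(t)\,\mathbb{E}^1[(\overline{\pi\sigma})_t],
\]
with $\varphi_n^\sigma$ and $\psi_n^\sigma$ exactly as defined in the statement. Under the hypothesis $\psi_n^\sigma(t)\neq1$ $\mathbb{P}^0$-a.s., this equation has the unique solution $\mathbb{E}^1[(\overline{\pi\sigma})_t]=\varphi_n^\sigma(t)/\big(1-\psi_n^\sigma(t)\big)$. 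Substituting it back into (\ref{eq:5}) produces the explicit $\pi_t^{i,*}$ displayed in the theorem, and multiplying (\ref{eq:5}) by $\mu_t^i$, averaging, and taking $\mathbb{E}^1$ in the same way yields the stated formula for $\mathbb{E}^1[(\overline{\pi\mu})_t]$.

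With the equilibrium aggregates pinned down, I would define $K_t^i$ by integrating the best-response ODE (\ref{eq:4}) in time, which gives precisely (\ref{eq:8}), so that by Proposition \ref{prop4} each pair $(U^{i,*},\pi^{i,*})$ is a forward relative performance when the remaining agents follow their equilibrium strategies. It then remains to verify the four conditions of Definition \ref{def4}: admissibility $\pi_t^{i,*}\in\mathcal{A}^i$ follows from the assumed integrability of the market coefficients together with the boundedness of $\theta_t^i$ and the non-degeneracy $\Sigma_t^i>0$; strict monotonicity and strict concavity in $x$ are immediate from the exponential form (\ref{eq:2}); and the $\mathbb{P}^1$-supermartingale and martingale properties under the equilibrium profile are exactly what Proposition \ref{prop4} delivers once the common aggregate has been substituted.

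I expect the only genuinely substantive step to be this fixed-point resolution, namely recognizing that the $n$-dimensional coupling collapses to the single scalar equation above and that $\psi_n^\sigma(t)\neq1$ is precisely the non-degeneracy condition guaranteeing a unique consistent aggregate. Everything downstream is substitution and a direct appeal to Proposition \ref{prop4}. The one place demanding care is the measurability justification for pulling $\mathbb{E}^1[(\overline{\pi\sigma})_t]$ out of the conditional expectation, which is what reduces the coupled system to a solvable linear equation for the aggregate.
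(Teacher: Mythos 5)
Your proposal is correct and follows essentially the same route as the paper: multiply the best-response formula (\ref{eq:5}) by $\sigma_t^i$, average, take $\mathbb{E}^1$ to obtain the affine consistency equation whose unique solution under $\psi_n^\sigma(t)\neq1$ is $\varphi_n^\sigma(t)/(1-\psi_n^\sigma(t))$, substitute back, and then invoke Proposition \ref{prop4} with $K_t^i$ given by integrating (\ref{eq:4}). Your explicit remark on the $\mathcal{F}^0$-measurability justifying the factoring of $\mathbb{E}^1[(\overline{\pi\sigma})_t]$ out of the conditional expectation is a detail the paper passes over silently, but the argument is otherwise identical.
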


\begin{proof}[Proof]
	Multiplying both sides of (\ref{eq:5}) by $\sigma_t^i$, and averaging over $i$'s, then
	\begin{equation*}
		(\overline{\pi\sigma})_t=\frac{1}{n}\sum_{k=1}^n\frac{\mu_t^k\delta_t^k\sigma_t^k}{\Sigma_t^k}+\frac{1}{n}\sum_{k=1}^n\frac{\theta_t^k(\sigma_t^k)^2}{\Sigma_t^k}\mathbb{E}^1[(\overline{\pi\sigma})_t].
	\end{equation*}
	Now, taking conditional expectations on both sides, we obtain
	\begin{equation*}
		\mathbb{E}^1[(\overline{\pi\sigma})_t]=\frac{\varphi_n^\sigma(t)}{1-\psi_n^\sigma(t)}.
	\end{equation*}
	Substituting it into (\ref{eq:5}), we obtain a desired optimal strategy $\pi_t^{i,*}$.\\
	Furthermore, if we let $K_t^i$ be given by (\ref{eq:8}), then it satisfies (\ref{eq:4}).
	Thus, $U^i(x,\mu,t)$ is a forward relative performance process. We can easily check that $\big(U^{i,*},\pi^{i,*}\big)$ satisfies conditions given in Definition \ref{def4}. Thus, $n$-pairs of $\big(U^{i,*},\pi^{i,*}\big)$ are a forward Nash equilibrium.\\
	Similarly, we have
	\begin{equation*}
		\mathbb{E}^1[(\overline{\pi\mu})_t]=\mathbb{E}^1\Big[\frac{1}{n}\sum_{k=1}^n\frac{(\mu_t^k)^2\delta_t^k}{\Sigma_t^k}\Big]+\mathbb{E}^1\Big[\frac{1}{n}\sum_{k=1}^n\frac{\mu_t^k\theta_t^k\sigma_t^k}{\Sigma_t^k}\Big]\frac{\varphi_n^\sigma(t)}{1-\psi_n^\sigma(t)}.
	\end{equation*}
\end{proof}

\begin{remark}
	We analyze the case where the value of $\psi_n^\sigma(t)$ becomes $1$. Note that $\theta_t^k\in[0,1]$, $(\delta_t^k)^2\leq\Sigma_t^k$, for all $k\in\{1,\ldots,n\}$, $\mathbb{P}$-a.s. Thus, $\psi_n^\sigma(t)\leq1$, $\mathbb{P}$-a.s. We conclude that $\psi_n^\sigma(t)=1$ if and only if $\theta_t^k=1,\;\nu_t^k=0$, for all $k\in\{1,\ldots,n\}$, $\mathbb{P}$-a.s. This is the case when all agents invest in a common risky asset, and take into account competition among agents as much as possible.
\end{remark}

\subsection{The mean field stochastic optimal control}\label{subsec3.2}

In this Section, we study the limit as $n\to\infty$ of the $n$-agent game analyzed in previous Section. As the number of players approaches infinity, there are two methods to investigate the asymptotic regime of stochastic differential games with a finite number of players. The difference between the two methods is whichever comes first, optimization or performing passage to the limit.\\
In previous studies \cite{LS1,LZ1,RP1,RP3},  they introduce the concepts of type vector and type distribution and deal with the continuum of agents game by constructing a mean-field game. However, since we consider the random coefficients rather than the constant coefficients, analysis using the type vector and type distribution is not obvious at this point. Therefore, we use a method different from the existing one.\\
We perform passaging to the limit $n\to\infty$ first to investigate the asymptotic regime, and then optimize over controlled dynamics of McKean-Vlasov type.
We define the stochastic processes $(\underline{X}_t^i)_{t\geq0}$, for $i\in\{1,\ldots,n\}$, satisfy the following McKean-Vlasov SDE,
\begin{equation*}
	\mathrm{d}\underline{X}_t^i=\pi\big(\underline{X}_t^i,\mathcal{L}^1(\underline{X}_t^i),t\big)\Big[\mu(\underline{X}_t^i,t)\mathrm{d}t+\nu(\underline{X}_t^i,t)\mathrm{d}W_t^i+\sigma(\underline{X}_t^i,t)\mathrm{d}W_t^0\Big].
\end{equation*}
Then the theory of conditional propagation of chaos states that, for all $i\geq1$, the followings hold:
\begin{equation*}
	\mathbb{P}^0\big\{\mathcal{L}^1(\underline{X}_t^i)=\mathcal{L}^1(\underline{X}_t^1),\;\text{for any}\;t\geq0\big\}=1,
\end{equation*}
\begin{equation*}
	\lim_{n\to\infty}\sup_{t\geq0}\mathbb{E}\big[W_2(\overline{M}_t^n,\mathcal{L}^1(\underline{X}_t^1))^2\big]=0,
\end{equation*}
where $W_2$ is the 2-Wasserstein distance, which is a metric on $\mathcal{P}_2(\mathbb{R}).$ We refer the reader to \cite[Section 2.1]{CD2}, for a detailed explanation of the theory of conditional propagation of chaos.\\
Therefore, we can consider a representative player $X=(X_t)_{t\geq0}$ such that
\begin{equation*}
	\mathbb{P}^0\big\{\mathcal{L}^1(X_t)=\mathcal{L}^1(\underline{X}_t^1),\;\text{for any}\;t\geq0\big\}=1,
\end{equation*}
and its dynamics is given by
\begin{equation}\label{eq:9}
	\mathrm{d}X_t=\pi_t\big(\mu_t\mathrm{d}t+\nu_t\mathrm{d}W_t+\sigma_t\mathrm{d}W_t^0\big),
\end{equation}
 where $W$ is a one-dimensional standard Brownian motion defined on $(\Omega^1,\mathcal{F}^1,\mathbb{P}^1)$ which is independent with $W^0$, and we denote
 \begin{equation*}
 	\mu(X_t,t)=\mu_t,\;\nu(X_t,t)=\nu_t,\;\sigma(X_t,t)=\sigma_t,\;\pi(X_t,\mathcal{L}^1(X_t),t)=\pi_t.
 \end{equation*}
 As the previous $n$-agent game, we call $W^0$ the \textit{common noise} and $W$ an \textit{idiosyncratic noise}.\\
Next, we define the admissiblity set $\mathcal{A}_\text{MF}$ for this representative player to be the collection of $\mathcal{F}$-progressively measurable processes $(\pi_t)_{t\geq0}$, such that, 
\begin{equation*}
	\mathbb{E}\big[\int_0^t\vert\pi_s\mu_s\vert^2\mathrm{d}s\big]<\infty,\;\mathbb{E}\big[\int_0^t\vert\pi_s\nu_s\vert^4\mathrm{d}s\big]<\infty,\;\mathbb{E}\big[\int_0^t\vert\pi_s\sigma_s\vert^4\mathrm{d}s\big]<\infty,
\end{equation*}
for any $t>0$. We say that a portfolio strategy $\pi_t$ for the representative agent is \textit{admissible} if it belongs to $\mathcal{A}_\text{MF}$.\\

Since $\lim\limits_{n\to\infty}\sup_{t\geq0}\mathbb{E}\big[W_2(\overline{M}_t^n,\mathcal{L}^1(X_t))^2\big]=0$, $Y_t^n=h(\overline{M}_t^n)$ converges to $Y_t=h(\mathcal{L}^1(X_t))$, naturally. Indeed, convergence with respect to $W_2$ implies the weak convergence of measures. We can compute the dynamics of $Y_t$ by using the Itô formula along a flow of conditional measures. See \cite[Section 4.3]{CD2}. After some calculations, the dynamics of $Y_t$ is given by
\begin{equation*}
	\mathrm{d}Y_t=\mathbb{E}^1[\pi_t\mu_t]\mathrm{d}t+\mathbb{E}^1[\pi_t\sigma_t]\mathrm{d}W_t^0,\;\;\mathbb{P}^0\text{-a.s.}
\end{equation*}
We define a stochastic process $\alpha:\Omega\times[0,\infty)\to\mathcal{P}_2(\mathbb{R})$ by
\begin{equation*}
	\alpha_t(\omega^0)=\mathcal{L}\big(Y_t(\omega^0,\cdot)\big).
\end{equation*}

Note that the representative agent's utility is a random field $U:\Omega\times(0,\infty)\times\mathcal{P}_2(\mathbb{R})\times[0,\infty)\to\mathbb{R}$ such that,
\begin{equation}\label{eq:10}
	U(x,\mu,t)=-\exp\Big[-\frac{1}{\delta_t}\big(x-\theta_t\lambda(\mu)\big)+K_t\Big],
\end{equation}
where $K_t$ is an $\mathcal{F}$-progressively measurable process that is differentiable in time with $K_0=0$. Here the random parameters satisfy the conditions $\delta_t=\delta(X_t,t)>0,\;\theta_t=\theta(X_t,t)\in[0,1],\;\mathbb{P}$-almost surely, and they represent the representative agent's absolute risk tolerance and absolute competition weight.\\

We refer the concept of CARA-mean field (MF)-forward relative performance and CARA-MF-equilibrium in \cite{RP1}, and extend these to the measure dependent forward relative performance process.

\begin{definition} (CARA-MF-forward relative performance equilibrium for the generic agent)
	We assume that all agents have CARA risk preference. The $\mathbb{F}$-progressively measurable random field $U:\Omega\times(0,\infty)\times\mathcal{P}_2(\mathbb{R})\times[0,\infty)\to\mathbb{R}$ is an MF-forward relative performance for the generic agent if, for all $t\geq0$, the following conditions hold:
	\begin{enumerate}
		\item $U(x,\mu,t)$ is strictly increasing and strictly concave in $x$, $\mathbb{P}$-a.s.
		\item For each $\pi_t\in\mathcal{A}_\text{MF}$, $U(X_t^\pi,\alpha_t^\pi,t)$ is a $\mathbb{P}^1$-(local) supermartingale, where $X_t^\pi$ is the wealth process solving (\ref{eq:9}) for the strategy $\pi_t$, and $\alpha_t^\pi$ is the corresponding conditional marginal distribution.
		\item There exists $\pi_t^*\in\mathcal{A}_\text{MF}$ such that $U(X_t^*,\alpha_t^*,t)$ is a $\mathbb{P}^1$-(local) martingale, where $X_t^*$ is the wealth process solving (\ref{eq:9}) for the strategy $\pi_t^*$, and $\alpha_t^*$ is the corresponding conditional marginal distribution.
	\end{enumerate}
	We call the optimal strategy $\pi_t^*$ as an MF-equilibrium, and denote the pair $(U,\pi^*)$ satisfying above conditions as an MF-forward relative performance equilibrium.
\end{definition}

We now present the second main finding, in which we show the existence of the MF-forward relative performance equilibrium for the generic agent when this agent has a CARA risk preference.

\begin{theorem}\label{thm6}
	Define the random quantities
	\begin{equation*}
		\varphi^\sigma(t)=\mathbb{E}^1\Big[\frac{\mu_t\delta_t\sigma_t}{\Sigma_t}\Big],
	\end{equation*}
	\begin{equation*}
		\psi^\sigma(t)=\mathbb{E}^1\Big[\frac{\theta_t(\sigma_t)^2}{\Sigma_t}\Big],
	\end{equation*}
	and let
	\begin{equation}\label{eq:11}
		K_t=-\int_0^t\frac{\theta_s}{\delta_s}\mathbb{E}^1[\pi_s\mu_s]+\frac{1}{2}\Big(\frac{\theta_s}{\delta_s}\Big)^2\frac{(\nu_s)^2}{\Sigma_s}\mathbb{E}^1[\pi_s\sigma_s]^2-\frac{\theta_s}{2\delta_s}\frac{\mu_s\sigma_s}{\Sigma_s}\mathbb{E}^1[\pi_s\sigma_s]-\frac{(\mu_s)^2}{2\Sigma_s}\mathrm{d}s
	\end{equation}
  	where
  	\begin{equation*}
  		\mathbb{E}^1[\pi_t\sigma_t]=\frac{\varphi^\sigma(t)}{1-\psi^\sigma(t)},
  	\end{equation*}
  \begin{equation*}
  	\mathbb{E}^1[\pi_t\mu_t]=\mathbb{E}^1\Big[\frac{(\mu_t)^2\delta_t}{\Sigma_t}\Big]+\mathbb{E}^1\Big[\frac{\mu_t\theta_t\sigma_t}{\Sigma_t}\Big]\frac{\varphi^\sigma(t)}{1-\psi^\sigma(t)}.
  \end{equation*}
	 Furthermore, we assume that $U(x,\mu,t)$, utility of the representative agent, satisfies conditions (\ref{eq:2}), (\ref{eq:10}). 
	If $\psi^\sigma(t)\neq 1$ $\mathbb{P}^0$-a.s. for every $t\geq0$, then there exists an optimal strategy $\pi_t^{*}$ defined by
	\begin{equation*}
		\pi_t^{*}=\frac{1}{\Sigma_t}\Big(\mu_t\delta_t+\theta_t\sigma_t\frac{\varphi^\sigma(t)}{1-\psi^\sigma(t)}\Big),\;t\geq0.
	\end{equation*}
	Then $\big(U^{*},\pi^{*}\big)$ is an MF-forward relative performance equilibrium, where $U^{*}$ satisfies (\ref{eq:3}) with $X_t^{*},\;\alpha_t^{*}$ and $K_t$.
\end{theorem}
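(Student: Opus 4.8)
The plan is to mirror the argument of Proposition \ref{prop4} and Theorem \ref{thm5} in the mean field regime, replacing the empirical averages $(\overline{\pi\mu})_t,\,(\overline{\pi\sigma})_t$ by the conditional expectations $\mathbb{E}^1[\pi_t\mu_t],\,\mathbb{E}^1[\pi_t\sigma_t]$. First I would apply the Itô-Wentzell-Lions formula of Proposition \ref{prop2} to the process $U(X_t,\alpha_t,t)$, where $X_t$ solves (\ref{eq:9}) and the conditional measure flow $\alpha_t$ is generated by $Y_t$. Reading off the coefficients of $X_t$ and $Y_t$, I take $b_t=\pi_t\mu_t$, $a_t=\pi_t\nu_t$, $a_t^0=\pi_t\sigma_t$ for the state process and $\beta_t=\mathbb{E}^1[\pi_t\mu_t]$, $\gamma_t=0$, $\gamma_t^0=\mathbb{E}^1[\pi_t\sigma_t]$ for the measure-generating process. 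The vanishing of $\gamma_t$ reflects that the idiosyncratic noise disappears in the limit, so every term in Proposition \ref{prop2} carrying $\gamma_t$ drops out.

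Next I would insert the partial and L-derivatives of $U$ from Proposition \ref{prop3} (valid here since the representative utility (\ref{eq:10}) has the same exponential form). Since $\frac{\partial^2 U}{\partial v\partial\mu}=0$, the corresponding second-order term vanishes identically, and the remaining computation produces an expression for $\frac{\mathrm{d}U(X_t,\alpha_t,t)}{U(X_t,\alpha_t,t)}$ that is the exact mean field analogue of (\ref{eq:7}): a $\mathrm{d}t$ drift which is a concave quadratic in $\pi_t$, plus the martingale increments $-\frac{1}{\delta_t}\pi_t\nu_t\,\mathrm{d}W_t$ and $-\frac{1}{\delta_t}\big(\pi_t\sigma_t-\theta_t\mathbb{E}^1[\pi_t\sigma_t]\big)\mathrm{d}W_t^0$. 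Completing the square in $\pi_t$ then forces the drift to equal $\frac{\Sigma_t}{2\delta_t^2}U\,|\pi_t-\pi_t^*|^2$ precisely when $\pi_t^*=\frac{1}{\Sigma_t}\big(\mu_t\delta_t+\theta_t\sigma_t\mathbb{E}^1[\pi_t\sigma_t]\big)$ and $K_t$ solves the random ODE obtained by setting the $\pi_t$-independent part of the drift to zero; integrating that ODE gives (\ref{eq:11}).

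To make $\pi_t^*$ explicit I would close the fixed point in $\mathbb{E}^1[\pi_t\sigma_t]$: multiplying the expression for $\pi_t^*$ by $\sigma_t$ and applying $\mathbb{E}^1[\cdot]$ yields the linear relation $\mathbb{E}^1[\pi_t\sigma_t]=\varphi^\sigma(t)+\psi^\sigma(t)\,\mathbb{E}^1[\pi_t\sigma_t]$, hence $\mathbb{E}^1[\pi_t\sigma_t]=\varphi^\sigma(t)/(1-\psi^\sigma(t))$ under the standing assumption $\psi^\sigma(t)\neq 1$; the same device applied with $\mu_t$ recovers the stated formula for $\mathbb{E}^1[\pi_t\mu_t]$. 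Substituting back gives the announced $\pi_t^*$. Finally I would check the three requirements of the definition: strict monotonicity and concavity in $x$ follow from $\frac{\partial U}{\partial x}=-\frac{1}{\delta_t}U>0$ and $\frac{\partial^2U}{\partial x^2}=\frac{1}{\delta_t^2}U<0$; the supermartingale property holds for every admissible $\pi_t$ because $U<0$ makes the drift $\frac{\Sigma_t}{2\delta_t^2}U|\pi_t-\pi_t^*|^2\leq 0$; and the martingale property holds for $\pi_t^*$ since the drift then vanishes, the integrability conditions defining $\mathcal{A}_\text{MF}$ ensuring the local martingales are genuine.

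The step demanding the most care is the application of the Itô-Wentzell-Lions formula along the conditional measure flow: here $Y_t=\mathbb{E}^1[X_t]$ is $\mathcal{F}^0$-measurable, so $\alpha_t$ is in fact a Dirac mass at $Y_t$ and the copy-space expectations collapse onto $Y_t$ itself. I expect the bookkeeping of which $\gamma$-terms survive, together with verifying the regularity and local-boundedness hypotheses of Proposition \ref{prop2} in this degenerate-measure setting, to be the main technical obstacle; the fixed-point resolution, by contrast, is routine linear algebra once $\psi^\sigma(t)\neq 1$ is imposed.
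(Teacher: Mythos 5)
Your proposal is correct and follows essentially the same route as the paper: apply the Itô-Wentzell-Lions formula to $U(X_t,\alpha_t,t)$, substitute the derivatives of Proposition \ref{prop3}, recognize the drift as a concave quadratic in $\pi_t$ whose completion of the square yields $\pi_t^*$ and the ODE for $K_t$, and then close the fixed point for $\mathbb{E}^1[\pi_t\sigma_t]$ by multiplying by $\sigma_t$ and taking conditional expectations. Your additional observation that $\alpha_t$ degenerates to a Dirac mass at $Y_t=\mathbb{E}^1[X_t]$ is a correct refinement the paper does not make explicit, but it does not alter the argument.
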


\begin{proof}[Proof]
	As in the $n$-agent game, we use the Itô-Wentzell-Lions formula in Proposition \ref{prop2} to compute the dynamics of a random field $U(x,\mu,t)$. Then we have $\mathbb{P}^0$-a.s.,
	\begin{align}\label{eq:12}
		\begin{split}
			\mathrm{d}U(X_t,\alpha_t,t)=\;&\frac{\partial U}{\partial t}(X_t,\alpha_t,t)
			\mathrm{d}t+\frac{\partial U}{\partial x}
			(X_t,\alpha_t,t)\pi_t\big(\mu_t\mathrm{d}t+\nu_t\mathrm{d}W_t+\sigma_t\mathrm{d}W_t^0\big)\\
			&+\frac{1}{2}\frac{\partial^2U}{\partial x^2}(X_t,\alpha_t,t)(\pi_t)^2\Sigma_t\mathrm{d}t\\
			&+\widetilde{\mathbb{E}}^1\bigg[\frac{\partial U}{\partial \mu}
			(X_t,\alpha_t,t)(\widetilde{Y_t})\mathbb{E}^1[\pi_t\mu_t]\bigg]\mathrm{d}t\\
			&+\widetilde{\mathbb{E}}^1\bigg[\frac{\partial U}{\partial \mu}
			(X_t,\alpha_t,t)(\widetilde{Y_t})\mathbb{E}^1[\pi_t\sigma_t]\bigg]\mathrm{d}W_t^0\\
			&+\frac{1}{2}\widetilde{\mathbb{E}}^1\bigg[\frac{\partial^2U}{\partial v \partial \mu}(X_t,\alpha_t,t)(\widetilde{Y_t})\big(\mathbb{E}^1[\pi_t\nu_t]^2+\mathbb{E}^1[\pi_t\sigma_t]^2\big)\bigg]\mathrm{d}t\\
			&+\frac{1}{2}\widehat{\mathbb{E}}^1\bigg[\widetilde{\mathbb{E}}^1\bigg[\frac{\partial^2U}{\partial \mu^2}(X_t,\alpha_t,t)(\widetilde{Y_t},\widehat{Y_t})\mathbb{E}^1[\pi_t\sigma_t]^2\bigg]\bigg]\mathrm{d}t\\
			&+\widetilde{\mathbb{E}}^1\bigg[\frac{\partial^2U}{\partial x \partial \mu}(X_t,\alpha_t,t)(\widetilde{Y_t})\pi_t\sigma_t\mathbb{E}^1[\pi_t\sigma_t]\bigg]\mathrm{d}t.
		\end{split}
	\end{align}
	Substituting the derivatives in Proposition \ref{prop3} into (\ref{eq:12}), we obtain $\mathbb{P}^0$-a.s.,
\begin{align}\label{eq:13}
	\begin{split}
		\frac{\mathrm{d}U(X_t,\alpha_t,t)}{U(X_t,\alpha_t,t)}=\;&\bigg[\frac{\Sigma_t}{2(\delta_t)^2}(\pi_t)^2-\frac{1}{(\delta_t)^2}\Big(\mu_t\delta_t+\theta_t\sigma_t\mathbb{E}^1[\pi_t\sigma_t]\Big)\pi_t\\
		&\qquad\qquad\qquad+\frac{\theta_t}{\delta_t}\Big(\mathbb{E}^1[\pi_t\mu_t]+\frac{\theta_t}{2\delta_t}\mathbb{E}^1[\pi_t\sigma_t]^2\Big)+\frac{\mathrm{d}K_t}{\mathrm{d}t}\bigg]\mathrm{d}t\\
		&-\frac{1}{\delta_t}\pi_t\nu_t\mathrm{d}W_t-\frac{1}{\delta_t}\Big(\pi_t\sigma_t-\theta_t\mathbb{E}^1[\pi_t\sigma_t]\Big)\mathrm{d}W_t^0.
	\end{split}
	&\end{align}
	For the process $U(x,\mu,t)$ to be an MF-forward relative performance, a value of the drift term in (\ref{eq:13}) should be less than or equal to 0 for all $\pi_t\in\mathcal{A}_\text{MF}$, and 0 for an optimal strategy. Note that the drift term in (\ref{eq:13}) is a quadratic function with respect to $\pi_t$ where the quadratic coefficient is negative.\\
	Let
	\begin{equation}\label{eq:14}
		\pi_t^*=\frac{1}{\Sigma_t}\Big(\mu_t\delta_t+\theta_t\sigma_t\mathbb{E}^1[\pi_t\sigma_t]\Big),
	\end{equation}
	 then the drift term is equal to $\frac{\Sigma_t^i}{2(\delta_t^i)^2}U^i\vert \pi_t^i-\pi_t^{i,*}\vert^2$ when $K_t$ is given as in (\ref{eq:11}).\\
	Now, multiplying both sides of (\ref{eq:14}) by $\sigma_t$, and taking conditional expectations on both sides, we obtain
	\begin{equation*}
		\mathbb{E}^1[\pi_t\sigma_t]=\frac{\varphi^\sigma(t)}{1-\psi^\sigma(t)}.
	\end{equation*}
	Substituting it into (\ref{eq:14}), we obtain a desired optimal strategy $\pi_t^{*}$.\\
	We can easily check that $\big(U^{*},\pi^{*}\big)$ satisfies conditions given in Definition \ref{def4}. Thus, $\big(U^{*},\pi^{*}\big)$ is an MF-forward relative performance equilibrium.\\
	Similarly, we have
	\begin{equation*}
		\mathbb{E}^1[\pi_t\mu_t]=\mathbb{E}^1\Big[\frac{(\mu_t)^2\delta_t}{\Sigma_t}\Big]+\mathbb{E}^1\Big[\frac{\mu_t\theta_t\sigma_t}{\Sigma_t}\Big]\frac{\varphi^\sigma(t)}{1-\psi^\sigma(t)}.
	\end{equation*}
\end{proof}

\begin{remark}
	As $n\to\infty$, the strategies $\pi_t^{i,*}$, weights $(\varphi_n^\sigma,\psi_n^\sigma)$, the utility $U^i(x,\mu,t)$ and an $\mathcal{F}$-progressively measurable process $K_t^i$ in Theorem \ref{thm5} converge to the corresponding quantities in Theorem \ref{thm6}.
\end{remark}

\section{CRRA risk preferences}\label{sec4}
We investigate agents who have power and logarithmic CRRA risk preferences with random individual relative risk tolerances and relative competition weights. Each agent measures its relative performance, taking into account competition among agents, and using the ratio between their own wealth and the average wealth of all agents as a benchmark. We proceed in a similar manner to 
Sect. \ref{sec3}, but the geometric average is used instead of the arithmetic average when computing the average wealth of all agents.
\subsection{The $n$-agent game}\label{subsec4.2}
We consider a game of $n$ agents competing with each other analogous to that of Subsect. \ref{subsec3.1}. We consider the identical market environment as in Sect. \ref{sec3}, that is, the dynamics of risky assets are the same as given in Subsect. \ref{subsec3.1}.\\

We suppose that each agent $i\in\{1,\ldots,n\}$ trades using a self-financing strategy $(\pi_t^i)_{t\geq0}$, which represents the (discounted by the bond) fraction(as opposed to the amount) of wealth invested in the $i$-th risky asset. Then dynamics of the $i$-th agent's wealth process $(X_t^i)_{t\geq0}$ is given by
\begin{equation}\label{eq:15}
	\mathrm{d}X_t^i=\pi_t^iX_t^i\big(\mu_t^i\mathrm{d}t+\nu_t^i\mathrm{d}W_t^i+\sigma_t^i\mathrm{d}W_t^0\big),\;\;X_0^i=x_0^i\in\mathbb{R},
\end{equation}
where
\begin{equation*}
	\mu_t^i=\mu(X_t^i,t),\;\nu_t^i=\nu(X_t^i,t),\;\sigma_t^i=\sigma(X_t^i,t),\;\pi_t^i=\pi\big(X_t^i,\frac{1}{n}\sum_{k=1}^n \delta_{X_t^k},t\big).
\end{equation*}
Since we consider the power function as an utility in this Section, it is natural to consider the $\pi_t^i$ as a fraction of wealth.\\
Next, we define the admissibility set $\mathcal{A}^i$ for the agent $i$ analogous to Subsect. \ref{subsec3.1}, that is the collection of $\mathcal{F}$-progressively measurable processes $(\pi_t^i)_{t\geq0}$, such that, 
\begin{equation*}
	\mathbb{E}\big[\int_0^t\vert\pi_s^i\mu_s^iX_s^i\vert^2\mathrm{d}s\big]<\infty,\;\mathbb{E}\big[\int_0^t\vert\pi_s^i\nu_s^iX_s^i\vert^4\mathrm{d}s\big]<\infty,\;\mathbb{E}\big[\int_0^t\vert\pi_s^i\sigma_s^iX_s^i\vert^4\mathrm{d}s\big]<\infty,
\end{equation*}
for any $t>0$. We say that a portfolio strategy $\pi_t^i$ for the agent $i$ is \textit{admissible} if it belongs to $\mathcal{A}^i$.\\

Considering competition among agents in order to measure a relative performance, we introduce a stochastic process $(Y_t^n)_{t\geq0}$ by
\begin{equation*}
	Y_t^n=h(\overline{M}_t^n),
\end{equation*} where
\begin{equation*}
	\overline{M}_t^n=\frac{1}{n}\sum_{k=1}^n \delta_{X_t^k},
\end{equation*} 
and
\begin{equation*}
	h(\mu)=\exp\Big(\int_\mathbb{R} \log x\;\mathrm{d}\mu(x)\Big).
\end{equation*}
Then $Y_t^n=\Big(\prod_{k=1}^nX_t^k\Big)^{1/n}$, and it is the geometric average wealth of all agents. We can compute the dynamics of $Y_t^n$ in two ways. Firstly, we can derive it from the general method $\mathrm{d}Y_t^n=\mathrm{d}\Big(\prod_{k=1}^nX_t^k\Big)^{1/n}$, by using the logarithm $\mathrm{d}\log(Y_t)$. Secondly, we can use the empirical projection $h^{(n)}(X_t^1,\ldots,X_t^n)$ and derive it from $\mathrm{d}Y_t^n=\mathrm{d}h^{(n)}(X_t^1,\ldots,X_t^n)$. After some calculations using Proposition \ref{prop1}, the dynamics of $Y_t^i$ is given by
\begin{equation*}
	\frac{\mathrm{d}Y_t^n}{Y_t^n}=\eta_t^n\mathrm{d}t+\frac{1}{n}\sum_{k=1}^n\pi_t^k\nu_t^k\mathrm{d}W_t^k+(\overline{\pi\sigma})_t\mathrm{d}W_t^0,
\end{equation*}
where
\begin{equation*}
	\eta_t^n=(\overline{\pi\mu})_t+\frac{1}{2}\Big((\overline{\pi\sigma})_t^2+\frac{1}{n}(\overline{(\pi\nu)^2})_t-(\overline{\pi^2\Sigma})_t\Big),
\end{equation*}
\begin{equation*}
	(\overline{(\pi\nu)^2})_t=\frac{1}{n}\sum_{k=1}^n(\pi_t^k\nu_t^k)^2,
\end{equation*}
\begin{equation*}
	(\overline{\pi^2\Sigma})_t=\frac{1}{n}\sum_{k=1}^n(\pi_t^k)^2\Sigma_t^k.
\end{equation*}
We define a stochastic process $\alpha^n:\Omega^0\times[0,\infty)\to\mathcal{P}_2(\mathbb{R})$ by
\begin{equation*}
	\alpha_t^n(\omega^0)=\mathcal{L}\big(Y_t^n(\omega^0,\cdot)\big).
\end{equation*}

\begin{remark}
	Similary to \cite[Remark 3.3]{LZ1}, it is more natural to replace the average wealth $Y_t^n$ with the average over all other agents. If we define $Y_t^i$ by
	\begin{equation*}
		Y_t^i=p_n(\overline{M}_t^n)q_n(X_t^i)
	\end{equation*}
	where
	\begin{equation*}
		p_n(\mu)=\exp\Big(\int_\mathbb{R} \frac{n}{n-1}\log x\;\mathrm{d}\mu(x)\Big)\;
	\end{equation*}
	and
	\begin{equation*}
		q_n(x)=x^{-\frac{1}{n-1}},
	\end{equation*}
	then we can obtain the corresponding results for that case.
\end{remark}

We assume that the $i$-th agent's utility is a random field $U^i:\Omega\times(0,\infty)\times\mathcal{P}_2(\mathbb{R})\times[0,\infty)\to\mathbb{R}$ such that
\begin{equation}\label{eq:16}
	U^i(x,\mu,t)=\begin{cases}
		\big(1-\frac{1}{\delta_t^i}\big)^{-1}\Big(\lambda(\mu)^{-\theta_t^i}x\Big)^{1-\frac{1}{\delta_t^i}}K_t^i, & \text{if }\delta_t^i\neq1\\
			\log\Big(\lambda(\mu)^{-\theta_t^i}x\Big)K_t^i+G_t^i, & \text{if }\delta_t^i=1
		\end{cases},
\end{equation}
where $K_t^i$ and $G_t^i$ are $\mathcal{F}$-progressively measurable processes that are differentiable in time with $K_0^i=1$ and $G_0^i=0$, and we define
\begin{equation*}
	\lambda(\mu)=\exp\Big(\int_\mathbb{R} \log x\;\mathrm{d}\mu(x)\Big).
\end{equation*}
Here, the random parameters satisfy the conditions $\delta_t^i=\delta(X_t^i,t)>0$, $\theta_t^i=\theta(X_t^i,t)\in[0,1]$, $\mathbb{P}$-almost surely, and they represent the $i$-th agent's relative risk tolerance and relative competition weight. Note that
\begin{equation*}
	U^i(X_t^i,\alpha_t^n,t)=\begin{cases}
		\big(1-\frac{1}{\delta_t^i}\big)^{-1}\Big(X_t^i\overline{X}_t^{-\theta_t^i}\Big)^{1-\frac{1}{\delta_t^i}}K_t^i, &\text{if }\delta_t^i\neq1\\
		\log\Big(X_t^i\overline{X}_t^{-\theta_t^i}\Big)K_t^i+G_t^i, & \text{if }\delta_t^i=1
		\end{cases}
\end{equation*}
where
\begin{equation*}
	\overline{X}_t=\Big(\prod_{k=1}^nX_t^k\Big)^{1/n},
\end{equation*}
which is almost the same as the CRRA power and logarithmic utility forms used in \cite{LS1,LZ1,RP3}. Note that
\begin{equation*}
	X_t^i\overline{X}_t^{-\theta_t^i}=(X_t^i)^{1-\theta_t^i}\Big(\frac{X_t^i}{\overline{X}_t}\Big)^{\theta_t^i}.
\end{equation*}
Thus, the smaller value of $\theta_t^i$, the relative performance becomes less relevant.\\

We can check that partial derivatives and L-derivatives of $U^i$ exist and they are given in the following Proposition.
\begin{proposition}\label{prop7}
	For each agent $i\in\{1,\ldots,n\}$, the utility $U^i$ has the partial derivatives and L-derivatives given as follows:
	\begin{equation*}
		\frac{\partial U^i}{\partial t}(x,\mu,t)=\begin{cases}
			\frac{1}{K_t^i}\frac{\mathrm{d}K_t^i}{\mathrm{d}t}U^i(x,\mu,t), &\text{if }\delta_t^i\neq1\\
			\log\Big(\lambda(\mu)^{-\theta_t^i}x\Big)\frac{\mathrm{d}K_t^i}{\mathrm{d}t}+\frac{\mathrm{d}G_t^i}{\mathrm{d}t}, & \text{if }\delta_t^i=1
			\end{cases},
	\end{equation*}
\begin{equation*}
	\frac{\partial U^i}{\partial x}(x,\mu,t)=\begin{cases}
		\big(1-\frac{1}{\delta_t^i}\big)x^{-1}U^i(x,\mu,t), &\text{if }\delta_t^i\neq1\\
		x^{-1}K_t^i, & \text{if }\delta_t^i=1
	\end{cases},
\end{equation*}
\begin{equation*}
	\frac{\partial^2U^i}{\partial x^2}(x,\mu,t)=\begin{cases}
		-\frac{1}{\delta_t^i}\big(1-\frac{1}{\delta_t^i}\big)x^{-2}U^i(x,\mu,t), &\text{if }\delta_t^i\neq1\\
		-x^{-2}K_t^i, & \text{if }\delta_t^i=1
	\end{cases},
\end{equation*}
\begin{equation*}
	\frac{\partial U^i}{\partial \mu}(x,\mu,t)(v)=\begin{cases}
		-\big(1-\frac{1}{\delta_t^i}\big)\theta_t^i\frac{1}{v}U^i(x,\mu,t), &\text{if }\delta_t^i\neq1\\
		-\theta_t^i\frac{1}{v}K_t^i, & \text{ if }\delta_t^i=1
	\end{cases},
\end{equation*}
\begin{equation*}
	\frac{\partial^2U^i}{\partial v \partial \mu}(x,\mu,t)(v)=\begin{cases}
		\big(1-\frac{1}{\delta_t^i}\big)\theta_t^i\frac{1}{v^2}U^i(x,\mu,t),&\text{if }\delta_t^i\neq1\\
		\theta_t^i\frac{1}{v^2}K_t^i ,&\text{if }\delta_t^i=1
	\end{cases},
\end{equation*}
\begin{equation*}
	\frac{\partial^2U^i}{\partial \mu^2}(x,\mu,t)(v,v')=\begin{cases}
		\big(1-\frac{1}{\delta_t^i}\big)^2(\theta_t^i)^2\frac{1}{v}\frac{1}{v'}U^i(x,\mu,t) ,&\text{if }\delta_t^i\neq1\\
		\quad0 ,& \text{if }\delta_t^i=1
	\end{cases},
\end{equation*}
\begin{equation*}
	\frac{\partial^2U^i}{\partial x \partial \mu}(x,\mu,t)(v)=\begin{cases}
		-\big(1-\frac{1}{\delta_t^i}\big)^2\theta_t^i\frac{1}{v}x^{-1}U^i(x,\mu,t) ,&\text{if }\delta_t^i\neq1\\
		\quad0 ,& \text{if }\delta_t^i=1
	\end{cases}.
\end{equation*}
\end{proposition}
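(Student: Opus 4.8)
The plan is to mirror the proof of the CARA case in Proposition \ref{prop3}, the only genuinely new ingredient being the L-derivative of the geometric-mean functional $\lambda(\mu)=\exp\big(\int_{\mathbb{R}}\log x\,\mathrm{d}\mu(x)\big)$. I would first dispose of the time and spatial derivatives, which require no measure calculus: for fixed $(\mu,t)$ the map $x\mapsto U^i(x,\mu,t)$ is, up to the frozen multiplicative factor $\lambda(\mu)^{-\theta_t^i(1-1/\delta_t^i)}K_t^i$, the classical power (resp. logarithmic) utility, so $\frac{\partial U^i}{\partial x}$ and $\frac{\partial^2 U^i}{\partial x^2}$ follow by elementary differentiation in each of the two cases $\delta_t^i\neq1$ and $\delta_t^i=1$. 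Likewise, since by the zero-volatility assumption (\ref{eq:3}) the time-drift of the random field is carried by $K_t^i$ and $G_t^i$, differentiating these explicit factors reproduces the stated $\frac{\partial U^i}{\partial t}$ in both cases.

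The heart of the argument is the measure derivative. I would write $\lambda(\mu)=\exp\big(g(\mu)\big)$ with $g(\mu)=\int_{\mathbb{R}}\log x\,\mathrm{d}\mu(x)$ and invoke the cited fact \cite[Section 5.2.2]{CD1} that the L-derivative of a functional of the form $\int_{\mathbb{R}}f\,\mathrm{d}\mu$ is the spatial derivative $f'$; applied to $f=\log$ this gives $\frac{\partial g}{\partial\mu}(\mu)(v)=\frac1v$. The chain rule for L-derivatives then yields $\frac{\partial\lambda}{\partial\mu}(\mu)(v)=\lambda(\mu)\,\frac1v$, exactly the multiplicative analogue of the relation $\frac{\partial\lambda}{\partial\mu}(\mu)(v)=1$ used in the additive CARA setting. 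Because $U^i$ depends on $\mu$ only through the scalar $\lambda(\mu)$, one more application of the ordinary chain rule --- differentiating the outer smooth function with respect to its $\lambda$-slot and multiplying by $\frac{\partial\lambda}{\partial\mu}(\mu)(v)$ --- produces $\frac{\partial U^i}{\partial\mu}(x,\mu,t)(v)$, and the algebra collapses to $-\big(1-\frac1{\delta_t^i}\big)\theta_t^i\frac1v U^i$ when $\delta_t^i\neq1$ and to $-\theta_t^i\frac1v K_t^i$ when $\delta_t^i=1$.

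The remaining second-order derivatives are then bookkeeping on the structural form of the first L-derivative, which is a product of a factor constant in $\mu$, the factor $\frac1v$, and $U^i$ (or $K_t^i$). Differentiating once more in the spatial argument $v$ (treating $U^i$ as $v$-independent) turns $\frac1v$ into $-\frac1{v^2}$ and gives $\frac{\partial^2 U^i}{\partial v\partial\mu}$; differentiating instead in $x$ hits only the $U^i$ factor through $\frac{\partial U^i}{\partial x}$ and gives the cross derivative $\frac{\partial^2 U^i}{\partial x\partial\mu}$; and differentiating a second time in $\mu$ at a fresh point $v'$ again acts only on the $U^i$ factor, reproducing a second copy of the first L-derivative and hence the quadratic expression $\big(1-\frac1{\delta_t^i}\big)^2(\theta_t^i)^2\frac1v\frac1{v'}U^i$. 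In the logarithmic case $\delta_t^i=1$ the first L-derivative $-\theta_t^i\frac1v K_t^i$ is independent of both $\mu$ and $x$, so the $\mu^2$- and $x\mu$-derivatives vanish, consistent with the stated formulas.

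The step I expect to require the most care is the justification of $\frac{\partial g}{\partial\mu}(\mu)(v)=\frac1v$: the integrand $\log x$ and its derivative $\frac1x$ are neither bounded nor of linear growth near $x=0$, so the cited L-derivative formula does not apply verbatim on all of $\mathcal{P}_2(\mathbb{R})$. I would resolve this by restricting, as is natural here, to measures supported on $(0,\infty)$ --- the wealth processes governed by (\ref{eq:15}) remain strictly positive --- and verifying the requisite integrability and Fr\'echet-differentiability of the lift directly, or by a localization/approximation argument, deferring the fine functional-analytic verification to \cite[Chapter 5]{CD1}. Once this single regularity point is settled, the chain rule for L-derivatives together with the elementary differentiations above delivers all seven formulas.
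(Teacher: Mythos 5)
Your proposal follows essentially the same route as the paper's proof: obtain $\frac{\partial\lambda}{\partial\mu}(\mu)(v)=\frac{1}{v}\lambda(\mu)$ by applying the cited L-derivative formula for functionals of the form $\int_{\mathbb{R}}f\,\mathrm{d}\mu$ to $f=\log$ together with the chain rule, and then derive all remaining formulas by ordinary differentiation of the explicit product structure of $U^i$. Your added caveat that $\frac{1}{v}$ fails the linear-growth condition near $v=0$, so that one must restrict to measures supported on $(0,\infty)$ where the wealth processes live, is a genuine refinement of the paper's proof, which simply asserts that $\frac{\mathrm{d}(\log v)}{\mathrm{d}v}=\frac{1}{v}$ is of at most linear growth without addressing this point.
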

\begin{proof}[Proof]
	It is known that the L-derivative of $u(\mu)$ of the form $\int_{\mathbb{R}}f(x)\mathrm{d}\mu(x)$ is given by $\frac{\partial f}{\partial x}$ if $\frac{\partial f}{\partial x}(x)$ is at most of linear growth. Since $\frac{\mathrm{d}(\log v)}{\mathrm{d} v}=\frac{1}{v}$ is at most of linear growth, we obtain $\partial_\mu \lambda(\mu)(v)=\frac{1}{v}\lambda(\mu)$. Other derivatives can be obtained using the usual differentiation method.
\end{proof}

We now define the CRRA-\textit{forward relative performance} analogous to previous Section.

\begin{definition}(CRRA-forward relative performance for the agent)\\
	We assume that all agents have CRRA risk preferences. For each agent $i$, suppose that a strategy for the other agent $j$, $\pi_t^j$, is arbitrary but fixed, for all $j\neq i$. The $\mathcal{F}$-progressively measurable random field $U^i:\Omega\times(0,\infty)\times\mathcal{P}_2(\mathbb{R})\times[0,\infty)\to\mathbb{R}$ is a forward relative performance for the agent $i$ if, for all $t\geq0$, it satisfies:
	\begin{enumerate}
		\item $U^i(x,\mu,t)$ is $\mathbb{P}$-a.s. strictly increasing and strictly concave in $x$.
		\item For any $\pi_t^i\in\mathcal{A}^i$, $U^i(X_t^i,\alpha_t^n,t)$ is a $\mathbb{P}^1$-(local) supermartingale $\mathbb{P}^0$-a.s.
		\item There exists $\pi_t^{i,*}\in\mathcal{A}^i$, such that $U^i(X_t^{i,*},\alpha_t^n,t)$ is a $\mathbb{P}^1$-(local) martingale $\mathbb{P}^0$-a.s. where $X_t^{i,*}$ satisfies (\ref{eq:15}) with a strategy $\pi_t^{i,*}$.
	\end{enumerate}
\end{definition}

When the other agents' strategies are arbitrary but fixed, we can reply to this strategy to obtain the agent's own optimal investment policy. Here we derive the random ODEs which $K_t^i$ and $G_t^i$ satisfy in order for $U^i$ to be a forward relative performance, and investigate the optimal strategy for each agent.

\begin{proposition}\label{prop8}(Best responses)\\
	Fix $i\in\{1,\ldots,n\}$ and suppose that $\pi_t^j$ for the other agents $j\neq i$ are arbitrary but fixed. We assume that $U^i$, the utility of agent $i$, satisfies conditions (\ref{eq:2}), (\ref{eq:16}) for given $K_t^i$ and $G_t^i$. Consider the random ODEs, in the case that $\delta_t^i\neq1$,
	\begin{align}\label{eq:17}
		\begin{split}
		\frac{\mathrm{d}K_t^i}{\mathrm{d}t}=\Big(1-\frac{1}{\delta_t^i}\Big)\bigg[\theta_t^i\Big(\mathbb{E}^1&[(\overline{\pi\mu})_t]-\frac{1}{2}\mathbb{E}^1[(\overline{\pi^2\Sigma})_t]\Big)-\Big(1-\frac{1}{\delta_t^i}\Big)(\theta_t^i)^2\mathbb{E}^1[(\overline{\pi\sigma})_t]^2\\
		&-\frac{1}{2\delta_t^i\Sigma_t^i}\Big(\mu_t^i\delta_t^i+(1-\delta_t^i)\theta_t^i\sigma_t^i\mathbb{E}^1[(\overline{\pi\sigma})_t]\Big)^2\bigg]K_t^i,
		\end{split}
	\end{align}
	and in the case that $\delta_t^i=1$,
	\begin{equation}\label{eq:18}
		\log\Big(\lambda(\alpha_t^n)^{-\theta_t^i}X_t^i\Big)\frac{\mathrm{d}K_t^i}{\mathrm{d}t}+\frac{\mathrm{d}G_t^i}{\mathrm{d}t}=\Big(\theta_t^i\Big(\mathbb{E}^1[(\overline{\pi\mu})_t]-\frac{1}{2}\mathbb{E}^1[(\overline{\pi^2\Sigma})_t]\Big)-\frac{(\mu_t^i)^2}{2\Sigma_t^i}\Big)K_t^i.
	\end{equation}
	Define the strategy $\pi_t^{i,*}$ by
	\begin{equation}\label{eq:19}
		\pi_t^{i,*}=\frac{1}{\Sigma_t^i}\Big(\mu_t^i\delta_t^i+(1-\delta_t^i)\theta_t^i\sigma_t^i\mathbb{E}^1\big[(\overline{\pi\sigma})_t\big]\Big),\;\; t\geq0.
	\end{equation}
	In the case that $\delta_t^i\neq 1$, if $K_t^i$ satisfies (\ref{eq:17}) $\mathbb{P}^0$-a.s., then $U^i(x,\mu,t)$ is a forward relative performance process, and the policy $\pi_t^{i,*}$ is an optimal strategy for an agent $i$.\\
	In the case that $\delta_t^i= 1$, if $K_t^i$ and $G_t^i$ satisfies (\ref{eq:18}) $\mathbb{P}^0$-a.s., then $U^i(x,\mu,t)$ is a forward relative performance process, and the policy $\pi_t^{i,*}$ is an optimal strategy for an agent $i$.
\end{proposition}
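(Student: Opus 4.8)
The plan is to mirror the argument of Proposition \ref{prop4}, replacing the arithmetic mean by the geometric mean $\lambda$ and the additive wealth dynamics by the multiplicative dynamics (\ref{eq:15}), and treating the power case $\delta_t^i\neq1$ and the logarithmic case $\delta_t^i=1$ in parallel. First I would apply the Itô-Wentzell-Lions formula (Proposition \ref{prop2}) to $U^i(X_t^i,\alpha_t^n,t)$, reading off the coefficients of $X_t^i$ as $b_t=\pi_t^iX_t^i\mu_t^i$, $a_t=\pi_t^iX_t^i\nu_t^i$, $a_t^0=\pi_t^iX_t^i\sigma_t^i$, and those of $Y_t^n$ from $\mathrm{d}Y_t^n/Y_t^n=\eta_t^n\,\mathrm{d}t+\tfrac1n\sum_k\pi_t^k\nu_t^k\,\mathrm{d}W_t^k+(\overline{\pi\sigma})_t\,\mathrm{d}W_t^0$, so that $\beta_t=Y_t^n\eta_t^n$, $\gamma_t^0=Y_t^n(\overline{\pi\sigma})_t$ and $|\gamma_t|^2+|\gamma_t^0|^2=(Y_t^n)^2\big(\tfrac1n(\overline{(\pi\nu)^2})_t+(\overline{\pi\sigma})_t^2\big)$. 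Since the utility has zero volatility by (\ref{eq:3}), all $\psi,\psi^0$ terms drop out at once.

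The decisive simplification comes from substituting the derivatives of Proposition \ref{prop7}. Each measure derivative carries a factor $1/v$, $1/v^2$ or $1/(vv')$ evaluated at the copy $\widetilde{Y_t^n}$ (and $\widehat{Y_t^n}$), while the matching coefficient of $Y_t^n$ carries a matching power of $Y_t^n$; these cancel exactly, which is precisely why the geometric mean is the correct benchmark here. After the cancellation the drift and diffusion collect into a factor $U^i$ (resp.\ $K_t^i$) times expressions in $\pi_t^i$ and the conditional averages. I would then use the identity $-\mathbb{E}^1[\eta_t^n]+\tfrac12\mathbb{E}^1\big[\tfrac1n(\overline{(\pi\nu)^2})_t+(\overline{\pi\sigma})_t^2\big]=-\mathbb{E}^1[(\overline{\pi\mu})_t]+\tfrac12\mathbb{E}^1[(\overline{\pi^2\Sigma})_t]$ to absorb the idiosyncratic and common-noise quadratic variations of $Y_t^n$ into the $\log Y$ correction. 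The upshot, exactly as in (\ref{eq:7}), is that the $\mathrm{d}t$-coefficient is a quadratic polynomial in $\pi_t^i$ with leading coefficient $-\tfrac{1}{2\delta_t^i}(1-\tfrac{1}{\delta_t^i})\Sigma_t^i$ (resp.\ $-\tfrac12\Sigma_t^i$).

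Next I would complete the square in $\pi_t^i$. The vertex is exactly the candidate (\ref{eq:19}): rewriting $\tfrac{\delta_t^i}{\Sigma_t^i}\big(\mu_t^i-(1-\tfrac1{\delta_t^i})\theta_t^i\sigma_t^i\mathbb{E}^1[(\overline{\pi\sigma})_t]\big)=\tfrac{1}{\Sigma_t^i}\big(\mu_t^i\delta_t^i+(1-\delta_t^i)\theta_t^i\sigma_t^i\mathbb{E}^1[(\overline{\pi\sigma})_t]\big)$ yields $\pi_t^{i,*}$, which in the log case $\delta_t^i=1$ collapses to $\mu_t^i/\Sigma_t^i$. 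Demanding that the $\pi_t^i$-independent remainder vanish produces the ODE (\ref{eq:17}) for $K_t^i$ in the power case, while in the log case separately matching the $\log(\lambda(\alpha_t^n)^{-\theta_t^i}X_t^i)$-proportional part and the constant part gives the single relation (\ref{eq:18}) for the pair $(K_t^i,G_t^i)$. With these choices the entire drift reduces to $(1-\tfrac1{\delta_t^i})U^i\big(-\tfrac{\Sigma_t^i}{2\delta_t^i}\big)|\pi_t^i-\pi_t^{i,*}|^2$ (resp.\ $-\tfrac{\Sigma_t^i}{2}K_t^i|\pi_t^i-\pi_t^{i,*}|^2$).

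The sign analysis is the step requiring the most care, and I would treat it as the crux. Unlike the CARA case, $U^i$ is not sign-definite: it is positive for $\delta_t^i>1$ and negative for $\delta_t^i<1$. The resolution is the observation that $(1-\tfrac1{\delta_t^i})U^i=(\lambda(\mu)^{-\theta_t^i}x)^{1-1/\delta_t^i}K_t^i>0$ (with $K_t^i>0$) regardless of the regime, so the drift is a strictly negative multiple of $|\pi_t^i-\pi_t^{i,*}|^2$, hence $\le0$ for every admissible $\pi_t^i$ and $=0$ at $\pi_t^{i,*}$; the log case is immediate from $K_t^i>0$. Combined with the admissibility conditions of $\mathcal{A}^i$ (which make the $\mathrm{d}W_t^i$ and $\mathrm{d}W_t^0$ integrals genuine $\mathbb{P}^1$-local martingales $\mathbb{P}^0$-a.s.), this delivers the supermartingale property for every strategy and the martingale property for $\pi_t^{i,*}$, while strict monotonicity and concavity in $x$ follow directly from $\partial_xU^i$ and $\partial_{xx}^2U^i$ in Proposition \ref{prop7}. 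I expect the bookkeeping of the many Itô-Wentzell-Lions terms, together with the uniform sign handling across $\delta_t^i\gtrless1$ and the boundary $\delta_t^i=1$, to be the only genuine obstacles; otherwise the argument is the same quadratic-optimization mechanism as in Proposition \ref{prop4}.
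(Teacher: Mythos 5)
Your proposal is correct and follows essentially the same route as the paper: apply the Itô--Wentzell--Lions formula, substitute the derivatives of Proposition \ref{prop7}, observe that the $1/v$-type factors cancel against the powers of $Y_t^n$, complete the square in $\pi_t^i$ to read off (\ref{eq:19}), and force the residual drift to zero to obtain (\ref{eq:17}) and (\ref{eq:18}). Your explicit verification that $(1-\tfrac{1}{\delta_t^i})U^i=(\lambda(\mu)^{-\theta_t^i}x)^{1-1/\delta_t^i}K_t^i>0$ uniformly across $\delta_t^i\gtrless1$ is actually handled more carefully than in the paper, which states the simplified drift without dwelling on the sign-indefiniteness of $U^i$.
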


\begin{proof}[Proof.]
	As in Proposition \ref{prop4}, we use the Itô-Wentzell-Lions formula in Proposition \ref{prop2} to compute the dynamics of a random field $U^i(x,\mu,t)$. Then we have $\mathbb{P}^0$-a.s.,
	\begin{align}\label{eq:20}
		\begin{split}
			\mathrm{d}U^i(X_t^i,\alpha_t^n,t)=\;&\frac{\partial U^i}{\partial t}(X_t^i,\alpha_t^n,t)\mathrm{d}t\\
			&+\frac{\partial U^i}{\partial x}(X_t^i,\alpha_t^n,t)\pi_t^iX_t^i\big(\mu_t^i\mathrm{d}t+\nu_t^i\mathrm{d}W_t^i+\sigma_t^i\mathrm{d}W_t^0\big)\\
			&+\frac{1}{2}\frac{\partial^2U^i}{\partial x^2}(X_t^i,\alpha_t^n,t)(\pi_t^i)^2(X_t^i)^2\Sigma_t^i\mathrm{d}t\\
			&+\widetilde{\mathbb{E}}^1\bigg[\frac{\partial U^i}{\partial \mu}(X_t^i,\alpha_t^n,t)(\widetilde{Y}_t^n)\frac{\widetilde{\eta}_t^n}{\widetilde{Y}_t^n}\bigg]\mathrm{d}t\\
			&+\widetilde{\mathbb{E}}^1\bigg[\frac{\partial U^i}{\partial \mu}(X_t^i,\alpha_t^n,t)(\widetilde{Y}_t^n)\frac{\overline{(\widetilde{\pi\sigma})}_t}{\widetilde{Y}_t^n}\bigg]\mathrm{d}W_t^0\\
			&+\frac{1}{2}\widetilde{\mathbb{E}}^1\bigg[\frac{\partial^2U^i}{\partial v \partial \mu} U^i(X_t^i,\alpha_t^n,t)(\widetilde{Y}_t^n)\bigg(\frac{\frac{1}{n}\overline{\widetilde{(\pi\nu
						)^2}}_t}{(\widetilde{Y}_t^n)^2}+\Big(\frac{\overline{(\widetilde{\pi\sigma})}_t}{\widetilde{Y}_t^n}\Big)^2\bigg)\bigg]\mathrm{d}t\\
			&+\frac{1}{2}\widehat{\mathbb{E}}^1\bigg[\widetilde{\mathbb{E}}^1\bigg[\frac{\partial^2U^i}{\partial \mu^2}(X_t^i,\alpha_t^n,t)(\widetilde{Y}_t^n,\widehat{Y}_t^n)\frac{\overline{(\widetilde{\pi\sigma})}_t}{\widetilde{Y}_t^n}\frac{\overline{(\widehat{\pi\sigma})}_t}{\widehat{Y}_t^n}\bigg]\bigg]\mathrm{d}t\\
			&+\widetilde{\mathbb{E}}^1\bigg[\frac{\partial^2U^i}{\partial x \partial \mu}(X_t^i,\alpha_t^n,t)(\widetilde{Y}_t^n)\pi_t^iX_t^i\sigma_t^i\frac{\overline{(\widetilde{\pi\sigma})}_t}{\widetilde{Y}_t^n}\bigg]\mathrm{d}t.
		\end{split}
	\end{align}
	Substituting the derivatives in Proposition \ref{prop7} into (\ref{eq:20}), we obtain $\mathbb{P}^0$-a.s., in case that $\delta_t^i\neq1$,
\begin{align}\label{eq:21}
	\begin{split}
		\frac{\mathrm{d}U^i(X_t^i,\alpha_t^n,t)}{U^i(X_t^i,\alpha_t^n,t)}=\;&\Big(1-\frac{1}{\delta_t^i}\Big)\bigg[-\frac{\Sigma_t^i}{2\delta_t^i}(\pi_t^i)^2+\frac{1}{\delta_t^i}\Big(\mu_t^i\delta_t^i+(1-\delta_t^i)\theta_t^i\sigma_t^i\mathbb{E}^1[(\overline{\pi\sigma})_t]\Big)\pi_t^i\\
		&\qquad\qquad \qquad+\Big(1-\frac{1}{\delta_t^i}\Big)(\theta_t^i)^2\mathbb{E}^1[(\overline{\pi\sigma})_t]^2 \\
		&\qquad\qquad \qquad-\theta_t^i\Big(\mathbb{E}^1[(\overline{\pi\mu})_t]-\mathbb{E}^1[(\overline{\pi^2\Sigma})_t]\Big)+\frac{1}{K_t^i}\frac{\mathrm{d}K_t^i}{\mathrm{d}t} \bigg]\mathrm{d}t\\
		&+\Big(1-\frac{1}{\delta_t^i}\Big)\bigg(\pi_t^i\nu_t^i\mathrm{d}W_t^i+\Big(\pi_t^i\sigma_t^i-\theta_t^i\mathbb{E}^1[(\overline{\pi\sigma})_t]\Big)\mathrm{d}W_t^0\bigg).
	\end{split}
\end{align}
	For the process $U^i(x,\mu,t)$ to be a forward relative performance, a value of the drift term in (\ref{eq:21}) should be less than or equal to 0 for all $\pi_t^i\in\mathcal{A}^i$, and 0 for an optimal strategy. Note that the drift term in (\ref{eq:21}) is a quadratic function with respect to $\pi_t^i$ where the quadratic coefficient is negative.\\
	If $K_t^i$ satisfies (\ref{eq:17}) and we define $\pi_t^{i,*}$ as in (\ref{eq:19}), then we can simplify the drift term to $-\frac{1}{2}\frac{\Sigma_t^i}{\delta_t^i}(X_t^i)^{1-\frac{1}{\delta_t^i}}\vert \pi_t^i-\pi_t^{i,*}\vert^2$. Thus, we conclude that $\pi_t^{i,^*}$ is an optimal strategy and $U^i(x,\mu,t)$ is a forward relative performance for agent $i$.\\
	In case that $\delta_t^i=1$, we can obtain analogous results, $\mathbb{P}^0$-a.s. we have,
	\begin{align}\label{eq:22}
		\begin{split}
			\mathrm{d}U^i(X_t^i,\alpha_t^n,t)=\;&\bigg[-\frac{\Sigma_t^i}{2}K_t^i(\pi_t^i)^2+\mu_t^i\pi_t^i-\theta_t^iK_t^i\Big(\mathbb{E}^1[(\overline{\pi\mu})_t]-\frac{1}{2}\mathbb{E}^1[(\overline{\pi^2\Sigma})_t]\Big)\\
			&\qquad\qquad\qquad\qquad\qquad\quad+\Big(\log\Big(\lambda(\alpha_t^n)^{-\theta_t^i}X_t^i\Big)\frac{\mathrm{d}K_t^i}{\mathrm{d}t}+\frac{\mathrm{d}G_t^i}{\mathrm{d}t}\Big)\bigg]\mathrm{d}t\\
			&+K_t^i\pi_t^i\Big(\nu_t^i\mathrm{d}W_t^i+\sigma_t^i\mathrm{d}W_t^0\Big).
		\end{split}
	\end{align}
	Thus, through a similar procedure, if $K_t^i$ and $G_t^i$ satisfy (\ref{eq:18}) and we define $\pi_t^{i,*}$ as in (\ref{eq:19}), then we conclude that $\pi_t^{i,*}$ is an optimal strategy and $U^i(x,\mu,t)$ is a forward relative performance for agent $i$.
\end{proof}

We define the CRRA-\textit{forward Nash equilibrium} analogous to Subsect. \ref{subsec3.1}.

\begin{definition}(CRRA-forward Nash equilibrium)\label{def7}
We assume that each agent $i$ has CRRA risk preference $U^i:\Omega\times(0,\infty)\times\mathcal{P}_2(\mathbb{R})\times[0,\infty)\to\mathbb{R}$, $1\leq i\leq n$. A forward Nash equilibrium consists of $n$-pairs of $\mathcal{F}$-progressively measurable random fields $\big(U^i,\pi^{i,*}\big)$, $1\leq i\leq n$, such that, for any $t\geq0$, the following conditions hold:
	\begin{enumerate}
		\item For each $i$, $\pi_t^{i,*}\in\mathcal{A}^i$.
		\item For each $i$, $U^i(x,\mu,t)$ is $\mathbb{P}$-a.s. strictly increasing and strictly concave in $x$.
		\item Let agents $j\neq i$ invest along the strategy $\pi_t^{j,*}$ and agent $i$ invests along the strategy $\pi_t^i\in\mathcal{A}^i$. Then $U^i(X_t^i,\alpha_t^n,t)$ is a $\mathbb{P}^1$-(local) supermartingale $\mathbb{P}^0$-a.s.
		\item Let all agents $i$ invest along the strategy $\pi_t^{i,*}$ and $X_t^{i,*}$ be the associated wealth process and $\alpha_t^{n,*}$ be the associated conditional marginal distribution. Then $U^i(X_t^{i,*},\alpha_t^{n,*},t)$ is a $\mathbb{P}^1$-(local) martingale $\mathbb{P}^0$-a.s.
	\end{enumerate}
\end{definition}

We now present the third main result, in which we find the forward Nash equilibrium when all agents have CRRA risk preferences.

\begin{theorem}\label{thm9}
	Assume the conditions of Proposition \ref{prop8} hold for all agents $i\in\{1,\ldots,n\}$. Define the random quantities
	\begin{equation*}
		\varphi_n^\sigma(t)=\mathbb{E}^1\Big[\frac{1}{n}\sum_{k=1}^n\frac{\mu_t^k\delta_t^k\sigma_t^k}{\Sigma_t^k}\Big],
	\end{equation*}
	\begin{equation*}
		\psi_n^\sigma(t)=\mathbb{E}^1\Big[\frac{1}{n}\sum_{k=1}^n(1-\delta_t^k)\frac{\theta_t^k(\sigma_t^k)^2}{\Sigma_t^k}\Big].
	\end{equation*}
	If $\psi_n^\sigma(t)\neq 1$ $\mathbb{P}^0$-a.s. for every $t\geq0$, then there exists an optimal strategy $\pi_t^{i,*}$ defined by
	\begin{equation*}
		\pi_t^{i,*}=\frac{1}{\Sigma_t^i}\Big(\mu_t^i\delta_t^i+(1-\delta_t^i)\theta_t^i\sigma_t^i\frac{\varphi_n^\sigma(t)}{1-\psi_n^\sigma(t)}\Big),\;t\geq0.
	\end{equation*}
	Furthermore, let
	\begin{equation}\label{eq:23}
		K_t^i=\begin{cases}
			-\exp\bigg[\displaystyle\int_0^t\Big(1-\frac{1}{\delta_s^i}\Big)\bigg(-\theta_s^i\Big(\mathbb{E}^1[(\overline{\pi\mu})_s]-\frac{1}{2}\mathbb{E}^1[(\overline{\pi^2\Sigma})_s]\Big)&\\
			\qquad\qquad+\Big(1-\frac{1}{\delta_s^i}\Big)(\theta_s^i)^2\mathbb{E}^1[(\overline{\pi\sigma})_s]^2&\\
			\qquad\qquad+\frac{1}{2\delta_s^i\Sigma_s^i}\Big(\mu_s^i\delta_s^i+(1-\delta_s^i)\theta_s^i\sigma_s^i\mathbb{E}^1[(\overline{\pi\sigma})_s]\Big)^2\bigg)\mathrm{d}s\bigg], &\text{if } \delta_t^i\neq1\\
			\qquad1 ,&\text {if } \delta_t^i=1
			\end{cases},
	\end{equation}
	and
	\begin{equation}\label{eq:24}
		G_t^i=-\displaystyle\int_0^t\Big(-\theta_s^i\Big(\mathbb{E}^1[(\overline{\pi\mu})_s]-\frac{1}{2}\mathbb{E}^1[(\overline{\pi^2\Sigma})_t]\Big)+\frac{(\mu_s^i)^2}{2\Sigma_s^i}\Big)\mathrm{d}s,
	\end{equation}
	where
	\begin{equation*}
		\mathbb{E}^1[(\overline{\pi\sigma})_t]=\frac{\varphi_n^\sigma(t)}{1-\psi_n^\sigma(t)},
	\end{equation*}
	\begin{equation*}
		\mathbb{E}^1[(\overline{\pi\mu})_t]=\mathbb{E}^1\Big[\frac{1}{n}\sum_{k=1}^n\frac{(\mu_t^k)^2\delta_t^k}{\Sigma_t^k}\Big]+\mathbb{E}^1\Big[\frac{1}{n}\sum_{k=1}^n(1-\delta_t^k)\frac{\mu_t^k\theta_t^k\sigma_t^k}{\Sigma_t^k}\Big]\frac{\varphi_n^\sigma(t)}{1-\psi_n^\sigma(t)},
	\end{equation*}
	\begin{equation*}
		\mathbb{E}^1[(\overline{\pi^2\Sigma})_t]=\mathbb{E}^1\bigg[\frac{1}{n}\sum_{k=1}^n\frac{1}{\Sigma_t^k}\bigg(\mu_t^k\delta_t^k+(1-\delta_t^k)\theta_t^k\sigma_t^k\frac{\varphi_n^\sigma(t)}{1-\psi_n^\sigma(t)}\bigg)^2\bigg].
	\end{equation*}
	 Then $n$-pairs of $\big(U^{i,*},\pi^{i,*}\big)$ are a forward Nash equilibrium, where $U^{i,*}$ satisfies (\ref{eq:16}) with $X_t^{i,*},\;\alpha_t^{n,*},\;K_t^i,$ and $G_t^i$.
\end{theorem}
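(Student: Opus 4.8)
The plan is to take the agent-by-agent best responses from Proposition~\ref{prop8} and close them into a mutually consistent (Nash) fixed point, following the same scheme as the CARA argument in Theorem~\ref{thm5} but carefully tracking the extra factor $(1-\delta_t^i)$ and treating the logarithmic case $\delta_t^i=1$ separately. The key observation is that the only object coupling the $n$ agents is the conditional aggregate $\mathbb{E}^1[(\overline{\pi\sigma})_t]$, so the whole problem collapses to determining this single scalar process.

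First I would substitute the best-response formula (\ref{eq:19}) for every agent into the definition of $(\overline{\pi\sigma})_t$: multiplying (\ref{eq:19}) by $\sigma_t^i$, averaging over $i\in\{1,\dots,n\}$, and taking the conditional expectation $\mathbb{E}^1[\cdot]$ given $\mathcal{F}^0$ yields, via the definitions of $\varphi_n^\sigma$ and $\psi_n^\sigma$, the scalar linear relation $\mathbb{E}^1[(\overline{\pi\sigma})_t]=\varphi_n^\sigma(t)+\psi_n^\sigma(t)\,\mathbb{E}^1[(\overline{\pi\sigma})_t]$. Under the standing hypothesis $\psi_n^\sigma(t)\neq1$ this inverts to $\mathbb{E}^1[(\overline{\pi\sigma})_t]=\varphi_n^\sigma(t)/(1-\psi_n^\sigma(t))$, and substituting back into (\ref{eq:19}) gives the announced explicit optimal strategy $\pi_t^{i,*}$. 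Repeating the multiply--average--condition step with the weight $\mu_t^i$ (respectively with $(\pi_t^{i,*})^2\Sigma_t^i$) produces the stated closed forms for $\mathbb{E}^1[(\overline{\pi\mu})_t]$ and $\mathbb{E}^1[(\overline{\pi^2\Sigma})_t]$.

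Next I would construct $K_t^i$ and $G_t^i$ by integrating the random ODEs of Proposition~\ref{prop8} with the aggregates just determined inserted into their coefficients. When $\delta_t^i\neq1$, equation (\ref{eq:17}) is linear and homogeneous in $K_t^i$, of the form $\mathrm{d}K_t^i/\mathrm{d}t=c_t^iK_t^i$; integrating with $K_0^i=1$ gives the exponential (\ref{eq:23}). When $\delta_t^i=1$, I would set $K_t^i\equiv1$, which is consistent with $K_0^i=1$ and annihilates the state-dependent term $\log(\lambda(\alpha_t^n)^{-\theta_t^i}X_t^i)\,\mathrm{d}K_t^i/\mathrm{d}t$ in (\ref{eq:18}); the residual equation is then the pure ODE $\mathrm{d}G_t^i/\mathrm{d}t=\theta_t^i\big(\mathbb{E}^1[(\overline{\pi\mu})_t]-\tfrac12\mathbb{E}^1[(\overline{\pi^2\Sigma})_t]\big)-(\mu_t^i)^2/(2\Sigma_t^i)$, which integrates to (\ref{eq:24}).

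Finally I would check the four requirements of Definition~\ref{def7}. Admissibility of $\pi_t^{i,*}$ and strict monotonicity and concavity of $U^i$ in $x$ follow from the CRRA form (\ref{eq:16}) together with the integrability built into $\mathcal{A}^i$; the supermartingale and martingale properties are inherited directly from Proposition~\ref{prop8}, because completing the square in $\pi_t^i$ reduces the drift in (\ref{eq:21}) (resp.\ in the logarithmic dynamics (\ref{eq:22})) to a nonpositive multiple of $\vert\pi_t^i-\pi_t^{i,*}\vert^2$ --- explicitly $-\tfrac12(\Sigma_t^i/\delta_t^i)(X_t^i)^{1-1/\delta_t^i}\vert\pi_t^i-\pi_t^{i,*}\vert^2$ in the power case --- which vanishes exactly at $\pi_t^i=\pi_t^{i,*}$. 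I expect the genuine difficulty to be organizational rather than conceptual: carrying the factor $(1-\delta_t^i)$ consistently through the averaging step and keeping the power and logarithmic regimes cleanly separated. The one delicate point is justifying $K_t^i\equiv1$ in the case $\delta_t^i=1$; this rests on the fact that (\ref{eq:18}) must hold as an identity in the state and not merely along a single trajectory, so that the coefficient of the non-constant logarithmic term is forced to vanish.
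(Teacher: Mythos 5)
Your proposal is correct and follows essentially the same route as the paper: close the best responses of Proposition \ref{prop8} by multiplying (\ref{eq:19}) by $\sigma_t^i$, averaging, and conditioning to solve the scalar fixed-point equation for $\mathbb{E}^1[(\overline{\pi\sigma})_t]$, then integrate the random ODEs for $K_t^i$ and $G_t^i$ and verify Definition \ref{def7}. Your treatment of the logarithmic case (forcing $K_t^i\equiv1$ so the state-dependent term in (\ref{eq:18}) drops) is in fact spelled out more carefully than in the paper's own proof, which handles it implicitly.
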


\begin{proof}[Proof]
	Multiplying both sides of (\ref{eq:19}) by $\sigma_t^i$, and averaging over $i$'s, then
	\begin{equation*}
		(\overline{\pi\sigma})_t=\frac{1}{n}\sum_{k=1}^n\frac{\mu_t^k\delta_t^k\sigma_t^k}{\Sigma_t^k}+\frac{1}{n}\sum_{k=1}^n(1-\delta_t^k)\frac{\theta_t^k(\sigma_t^k)^2}{\Sigma_t^k}\mathbb{E}^1[(\overline{\pi\sigma})_t].
	\end{equation*}
	Now, taking conditional expectations on both sides, we obtain
	\begin{equation*}
		\mathbb{E}^1[(\overline{\pi\sigma})_t]=\frac{\varphi_n^\sigma(t)}{1-\psi_n^\sigma(t)}.
	\end{equation*}
	Substituting it into (\ref{eq:19}), we obtain a desired optimal strategy $\pi_t^{i,*}$.\\
	Furthermore, if we let $K_t^i$ and $G_t^i$ be given by (\ref{eq:23}), (\ref{eq:24}), then it satisfies (\ref{eq:17}) for $\delta_t^i\neq1$, and (\ref{eq:18}) for $\delta_t^i=1$.
	Thus, $U^i(x,\mu,t)$ is a forward relative performance process. We can easily check that $\big(U^{i,*},\pi^{i,*}\big)$ satisfies conditions given in Definition \ref{def7}. Thus, $n$-pairs of $\big(U^{i,*},\pi^{i,*}\big)$ are a forward Nash equilibrium.\\
	Similarly, we have
	\begin{equation*}
		\mathbb{E}^1[(\overline{\pi\mu})_t]=\mathbb{E}^1\Big[\frac{1}{n}\sum_{k=1}^n\frac{(\mu_t^k)^2\delta_t^k}{\Sigma_t^k}\Big]+\mathbb{E}^1\Big[\frac{1}{n}\sum_{k=1}^n(1-\delta_t^k)\frac{\mu_t^k\theta_t^k\sigma_t^k}{\Sigma_t^k}\Big]\frac{\varphi_n^\sigma(t)}{1-\psi_n^\sigma(t)},
	\end{equation*}
	\begin{equation*}
		\mathbb{E}^1[(\overline{\pi^2\Sigma})_t]=\mathbb{E}^1\bigg[\frac{1}{n}\sum_{k=1}^n\frac{1}{\Sigma_t^k}\bigg(\mu_t^k\delta_t^k+(1-\delta_t^k)\theta_t^k\sigma_t^k\frac{\varphi_n^\sigma(t)}{1-\psi_n^\sigma(t)}\bigg)^2\bigg].
	\end{equation*}
\end{proof}

\subsection{The mean field stochastic optimal control}\label{subsec4.2}

In this Section, we study the limit as $n\to\infty$ of the $n$-agent game analyzed in previous Section, analogously to the CARA exponential risk preferences in Subsect. \ref{subsec3.2}\\

Recall that we perform passaging to the limit $n\to\infty$ first to investigate the asymptotic regime, and then optimize over controlled dynamics of McKean-Vlasove type. We define the stochastic processes $(\underline{X}_t^i)_{t\geq0}$, for $i\in\{1,\ldots,n\}$, satisfy the following McKean-Vlasov SDE,
\begin{equation*}
	\mathrm{d}\underline{X}_t^i=\pi\big(\underline{X}_t^i,\mathcal{L}^1(\underline{X}_t^i),t\big)\underline{X}_t^i\Big[\mu(\underline{X}_t^i,t)\mathrm{d}t+\nu(\underline{X}_t^i,t)\mathrm{d}W_t^i+\sigma(\underline{X}_t^i,t)\mathrm{d}W_t^0\Big].
\end{equation*}
Then the theory of conditional propagation of chaos states that, for all $i\geq1$, the followings hold:
\begin{equation*}
	\mathbb{P}^0\big\{\mathcal{L}^1(\underline{X}_t^i)=\mathcal{L}^1(\underline{X}_t^1),\;\text{for any}\;t\geq0\big\}=1,
\end{equation*}
\begin{equation*}
	\lim_{n\to\infty}\sup_{t\geq0}\mathbb{E}\big[W_2(\overline{M}_t^n,\mathcal{L}^1(\underline{X}_t^1))^2\big]=0.
\end{equation*}
Therefore, we can consider a representative player $X=(X_t)_{t\geq0}$ such that
\begin{equation*}
	\mathbb{P}^0\big\{\mathcal{L}^1(X_t)=\mathcal{L}^1(\underline{X}_t^1),\;\text{for any}\;t\geq0\big\}=1,
\end{equation*}
and its dynamics is given by
\begin{equation}\label{eq:25}
	\mathrm{d}X_t=\pi_tX_t\big(\mu_t\mathrm{d}t+\nu_t\mathrm{d}W_t+\sigma_t\mathrm{d}W_t^0\big),
\end{equation}
where
\begin{equation*}
	\mu(X_t,t)=\mu_t,\;\nu(X_t,t)=\nu_t,\;\sigma(X_t,t)=\sigma_t,\;\pi(X_t,\mathcal{L}^1(X_t),t)=\pi_t.
\end{equation*}
Next, we define the admissiblity set $\mathcal{A}_\text{MF}$ for this representative player to be the collection of $\mathcal{F}$-progressively measurable processes $(\pi_t)_{t\geq0}$, such that, 
\begin{equation*}
	\mathbb{E}\big[\int_0^t\vert\pi_s\mu_sX_s\vert^2\mathrm{d}s\big]<\infty,\;\mathbb{E}\big[\int_0^t\vert\pi_s\nu_sX_s\vert^4\mathrm{d}s\big]<\infty,\;\mathbb{E}\big[\int_0^t\vert\pi_s\sigma_sX_s\vert^4\mathrm{d}s\big]<\infty,
\end{equation*}
for any $t>0$. We say that a portfolio strategy $\pi_t$ for the representative agent is \textit{admissible} if it belongs to $\mathcal{A}_\text{MF}$.\\

Since $\lim\limits_{n\to\infty}\sup_{t\geq0}\mathbb{E}\big[W_2(\overline{M}_t^n,\mathcal{L}^1(X_t))^2\big]=0$, $Y_t^n=h(\overline{M}_t^n)$ converges to $Y_t=h(\mathcal{L}^1(X_t))$, naturally. We can compute the dynamics of $Y_t$ by using the Itô formula along a flow of conditional measures. See \cite[Section 4.3]{CD2}. After some calculations, the dynamics of $Y_t$ is given by
\begin{equation*}
	\frac{\mathrm{d}Y_t}{Y_t}=\eta_t\mathrm{d}t+\mathbb{E}^1[\pi_t\sigma_t]\mathrm{d}W_t^0,\;\;\mathbb{P}^0\text{-a.s.},
\end{equation*}
where we define the auxiliary quantity
\begin{equation*}
	\eta_t=\mathbb{E}^1[\pi_t\mu_t]+\frac{1}{2}\big(\mathbb{E}^1[\pi_t\sigma_t]^2-\mathbb{E}^1[\pi_t^2\Sigma_t]\big).
\end{equation*}
We define a stochastic process $\alpha:\Omega\times[0,\infty)\to\mathcal{P}_2(\mathbb{R})$ by
\begin{equation*}
	\alpha_t(\omega^0)=\mathcal{L}\big(Y_t(\omega^0,\cdot)\big).
\end{equation*}

Note that the representative agent's utility is a random field $U:\Omega\times(0,\infty)\times\mathcal{P}_2(\mathbb{R})\times[0,\infty)\to\mathbb{R}$ such that,
\begin{equation}\label{eq:26}
	U(x,\mu,t)=\begin{cases}
		\big(1-\frac{1}{\delta_t}\big)^{-1}\Big(\lambda(\mu)^{-\theta_t}x\Big)^{1-\frac{1}{\delta_t}}K_t, & \text{if }\delta_t\neq1\\
		\log\Big(\lambda(\mu)^{-\theta_t}x\Big)K_t+G_t, & \text{if }\delta_t^i=1
	\end{cases},
\end{equation}
where $K_t$ and $G_t$ are $\mathcal{F}$-progressively measurable processes that are differentiable in time with $K_0=1$ and $G_0=0$.
Here, the random parameters satisfy the conditions $\delta_t=\delta(X_t,t)>0$, $\theta_t=\theta(X_t,t)\in[0,1]$, $\mathbb{P}$-almost surely, and they represent the $i$-th agent's relative risk tolerance and relative competition weight.\\

We refer the concept of the CRRA-MF-forward relative performance equilibrium and the CRRA-MF-equilibrium in \cite{RP3}, and extend these to the measure dependent forward relative performance process.

\begin{definition} (CRRA-MF-forward relative performance equilibrium for the generic agent)
	We assume that all agents have CRRA risk preference. The $\mathbb{F}$-progressively measurable random field $U:\Omega\times(0,\infty)\times\mathcal{P}_2(\mathbb{R})\times[0,\infty)\to\mathbb{R}$ is an MF-forward relative performance for the generic agent if, for all $t\geq0$, the following conditions hold:
	\begin{enumerate}
		\item $U(x,\mu,t)$ is strictly increasing and strictly concave in $x$, $\mathbb{P}$-a.s.
		\item For each $\pi_t\in\mathcal{A}_\text{MF}$, $U(X_t^\pi,\alpha_t^\pi,t)$ is a $\mathbb{P}^1$-(local) supermartingale, where $X_t^\pi$ is the wealth process solving (\ref{eq:25}) for the strategy $\pi_t$, and $\alpha_t^\pi$ is the corresponding conditional marginal distribution.
		\item There exists $\pi_t^*\in\mathcal{A}_\text{MF}$ such that $U(X_t^*,\alpha_t^*,t)$ is a $\mathbb{P}^1$-(local) martingale, where $X_t^*$ is the wealth process solving (\ref{eq:25}) for the strategy $\pi_t^*$, and $\alpha_t^*$ is the corresponding conditional marginal distribution.
	\end{enumerate}
	We call the optimal strategy $\pi_t^*$ as an MF-equilibrium, and denote the pair $(U,\pi^*)$ satisfying above conditions as an MF-forward relative performance equilibrium.
\end{definition}

We now present the fourth main finding, in which we show the existence of the MF-forward relative performance equilibrium for the generic agent when this agent has a CRRA risk preference.

\begin{theorem}\label{thm10}
	Define the random quantities
	\begin{equation*}
		\varphi^\sigma(t)=\mathbb{E}^1\Big[\frac{\mu_t\delta_t\sigma_t}{\Sigma_t}\Big],
	\end{equation*}
	\begin{equation*}
		\psi^\sigma(t)=\mathbb{E}^1\Big[(1-\delta_t)\frac{\theta_t(\sigma_t)^2}{\Sigma_t}\Big],
	\end{equation*}
	and let
	\begin{equation}\label{eq:27}
		K_t=\begin{cases}
			-\exp\bigg[\displaystyle\int_0^t\Big(1-\frac{1}{\delta_s}\Big)\bigg(-\theta_s\Big(\mathbb{E}^1[\pi_s\mu_s]-\frac{1}{2}\mathbb{E}^1[(\pi_s)^2\Sigma_s]\Big)&\\
			\qquad\qquad+\Big(1-\frac{1}{\delta_s}\Big)(\theta_s)^2\mathbb{E}^1[\pi_s\sigma_s]^2&\\
			\qquad\qquad+\frac{1}{2\delta_s\Sigma_s}\Big(\mu_s\delta_s+(1-\delta_s)\theta_s\sigma_s\mathbb{E}^1[\pi_s\sigma_s]\Big)^2\bigg)\mathrm{d}s\bigg], &\text{if } \delta_t\neq1\\
			\qquad1, &\text {if } \delta_t=1
		\end{cases},
	\end{equation}
	and
	\begin{equation}\label{eq:28}
		G_t=-\displaystyle\int_0^t\Big(-\theta_s\Big(\mathbb{E}^1[\pi_s\mu_s]-\frac{1}{2}\mathbb{E}^1[(\pi_s)^2\Sigma_s]\Big)+\frac{(\mu_s)^2}{2\Sigma_s}\Big)\mathrm{d}s,
	\end{equation}
	where
	\begin{equation*}
		\mathbb{E}^1[\pi_t\sigma_t]=\frac{\varphi^\sigma(t)}{1-\psi^\sigma(t)},
	\end{equation*}
	\begin{equation*}
		\mathbb{E}^1[\pi_t\mu_t]=\mathbb{E}^1\Big[\frac{(\mu_t)^2\delta_t}{\Sigma_t}\Big]+\mathbb{E}^1\Big[(1-\delta_t)\frac{\mu_t\theta_t\sigma_t}{\Sigma_t}\Big]\frac{\varphi^\sigma(t)}{1-\psi^\sigma(t)},
	\end{equation*}
	\begin{equation*}
		\mathbb{E}^1[(\pi_t)^2\Sigma_t]=\mathbb{E}^1\bigg[\frac{1}{\Sigma_t}\bigg(\mu_t\delta_t+(1-\delta_t)\theta_t\sigma_t\frac{\varphi^\sigma(t)}{1-\psi^\sigma(t)}\bigg)^2\;\bigg].
	\end{equation*}
	 Furthermore, we assume that $U(x,\mu,t)$, the utility of the representative agent, satisfies conditions (\ref{eq:2}), (\ref{eq:26}).
	If $\psi^\sigma(t)\neq 1$ $\mathbb{P}^0$-a.s. for every $t\geq0$, then there exists an optimal strategy $\pi_t^{*}$ defined by
	\begin{equation*}
		\pi_t^{*}=\frac{1}{\Sigma_t}\Big(\mu_t\delta_t+(1-\delta_t)\theta_t\sigma_t\frac{\varphi^\sigma(t)}{1-\psi^\sigma(t)}\Big),\;t\geq0.
	\end{equation*}
	Then $\big(U^{*},\pi^{*}\big)$ is an MF-forward relative performance equilibrium, where $U^{*}$ satisfies (\ref{eq:26}) with $X_t^{*},\;\alpha_t^{*},\;K_t$ and $G_t$.
\end{theorem}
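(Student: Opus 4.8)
The plan is to follow the proof of Theorem \ref{thm6} verbatim in structure, substituting the CRRA machinery of Proposition \ref{prop7} for the CARA machinery and reading the mean-field quantities as the $n\to\infty$ limits of those in Theorem \ref{thm9}. First I would apply the Itô-Wentzell-Lions formula of Proposition \ref{prop2} to $U(X_t,\alpha_t,t)$, where $X_t$ solves the representative McKean-Vlasov dynamics (\ref{eq:25}) and $\alpha_t=\mathcal{L}^1(Y_t)$ with $Y_t$ governed by $\frac{\mathrm{d}Y_t}{Y_t}=\eta_t\mathrm{d}t+\mathbb{E}^1[\pi_t\sigma_t]\mathrm{d}W_t^0$ and $\eta_t=\mathbb{E}^1[\pi_t\mu_t]+\frac{1}{2}(\mathbb{E}^1[\pi_t\sigma_t]^2-\mathbb{E}^1[(\pi_t)^2\Sigma_t])$. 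This reproduces the mean-field analogue of (\ref{eq:20}), in which the empirical averages $(\overline{\pi\mu})_t$, $(\overline{\pi\sigma})_t$, $(\overline{\pi^2\Sigma})_t$ are replaced by $\mathbb{E}^1[\pi_t\mu_t]$, $\mathbb{E}^1[\pi_t\sigma_t]$, $\mathbb{E}^1[(\pi_t)^2\Sigma_t]$ and the $\frac{1}{n}$-weighted idiosyncratic contribution disappears in the limit.

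Next I would insert the CRRA derivatives of Proposition \ref{prop7} and treat the cases $\delta_t\neq1$ and $\delta_t=1$ separately. In the power case this gives the mean-field counterpart of (\ref{eq:21}): after dividing by $U$, the drift is a concave quadratic in $\pi_t$, which I complete into a perfect square by taking $\pi_t^*$ as in the statement and choosing $K_t$ via (\ref{eq:27}) so that the residual constant drift cancels; the drift then collapses to $-\frac{1}{2}\frac{\Sigma_t}{\delta_t}(X_t)^{1-\frac{1}{\delta_t}}|\pi_t-\pi_t^*|^2$, which is nonpositive and vanishes precisely at $\pi_t^*$. In the logarithmic case I would instead obtain the analogue of (\ref{eq:22}) with $K_t\equiv1$, and the pair $(K_t,G_t)$ prescribed by (\ref{eq:27})--(\ref{eq:28}) again renders the drift a perfect square vanishing at $\pi_t^*$.

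To close the mean-field fixed point I would multiply the candidate $\pi_t^*$ by $\sigma_t$ and apply the conditional expectation $\mathbb{E}^1[\cdot]$, producing the scalar linear relation $\mathbb{E}^1[\pi_t\sigma_t]=\varphi^\sigma(t)+\psi^\sigma(t)\mathbb{E}^1[\pi_t\sigma_t]$, whose unique solution under the hypothesis $\psi^\sigma(t)\neq1$ is $\mathbb{E}^1[\pi_t\sigma_t]=\frac{\varphi^\sigma(t)}{1-\psi^\sigma(t)}$. Substituting this back yields the explicit $\pi_t^*$ of the statement, and the same averaging applied with $\mu_t$ and with $(\pi_t^*)^2\Sigma_t$ delivers the stated expressions for $\mathbb{E}^1[\pi_t\mu_t]$ and $\mathbb{E}^1[(\pi_t)^2\Sigma_t]$. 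It then remains to check the three defining conditions of the MF-forward relative performance equilibrium: strict monotonicity and concavity in $x$ follow directly from the CRRA form (\ref{eq:26}), while the supermartingale property for every $\pi_t\in\mathcal{A}_\text{MF}$ and the martingale property at $\pi_t^*$ follow from the sign of the completed-square drift, the integrability built into $\mathcal{A}_\text{MF}$ guaranteeing that the stochastic integrals are genuine local martingales.

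The main obstacle is the rigorous application of the Itô-Wentzell-Lions formula in this McKean-Vlasov setting. Unlike the CARA case, where by Proposition \ref{prop3} the measure derivatives are independent of the new variable $v$ and the mixed derivative $\frac{\partial^2U}{\partial v\partial\mu}$ vanishes outright, the CRRA derivatives of Proposition \ref{prop7} carry explicit $1/v$ and $1/v^2$ weights evaluated at the copy $\widetilde{Y}_t$ and the mixed derivative no longer vanishes; one must therefore track how these weights combine with the $Y_t$-proportional drift and $W^0$-volatility of the geometric-mean process before the cancellations producing the clean drift emerge. One must also verify that this formal $n\to\infty$ passage is consistent with the conditional propagation of chaos recalled above and that the local super/martingale assertions hold under only the stated integrability, presumably via a localization argument. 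Conceptually, however, the argument adds nothing beyond Theorem \ref{thm6} and Theorem \ref{thm9}: it is the CRRA power and logarithmic derivatives combined with the mean-field limit, so the effort is careful bookkeeping rather than a new idea.
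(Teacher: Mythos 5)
Your proposal follows the paper's own proof essentially verbatim: apply the Itô--Wentzell--Lions formula to obtain the mean-field analogue of the drift, substitute the CRRA derivatives of Proposition \ref{prop7}, complete the square in $\pi_t$ with $K_t$ (and $G_t$ in the logarithmic case) chosen to kill the residual drift, and close the fixed point by multiplying by $\sigma_t$ and taking $\mathbb{E}^1$ under the hypothesis $\psi^\sigma(t)\neq1$. The additional care you flag about the $1/v$ and $1/v^2$ weights in the measure derivatives is sound bookkeeping but does not change the route; the argument is the same as the paper's.
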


\begin{proof}[Proof]
	As in the $n$-agent game, we use the Itô-Wentzell-Lions formula in Proposition \ref{prop2} to compute the dynamics of a random field $U(x,\mu,t)$. Then we have $\mathbb{P}^0$-a.s.,
	\begin{align}\label{eq:29}
		\begin{split}
			\mathrm{d}U(X_t,\alpha_t,t)=\;&\frac{\partial U}{\partial t}(X_t,\alpha_t,t)
			\mathrm{d}t+\frac{\partial U}{\partial x}
			(X_t,\alpha_t,t)\pi_tX_t\big(\mu_t\mathrm{d}t+\nu_t\mathrm{d}W_t+\sigma_t\mathrm{d}W_t^0\big)\\
			&+\frac{1}{2}\frac{\partial^2U}{\partial x^2}(X_t,\alpha_t,t)(\pi_t)^2(X_t)^2\Sigma_t\mathrm{d}t\\
			&+\widetilde{\mathbb{E}}^1\bigg[\frac{\partial U}{\partial \mu}
			(X_t,\alpha_t,t)(\widetilde{Y}_t)\frac{\widetilde{\eta}_t}{\widetilde{Y}_t}\bigg]\mathrm{d}t\\
			&+\widetilde{\mathbb{E}}^1\bigg[\frac{\partial U}{\partial \mu}
			(X_t,\alpha_t,t)(\widetilde{Y}_t)\frac{\widetilde{\mathbb{E}}^1[\widetilde{\pi}_t\widetilde{\sigma}_t]}{\widetilde{Y}_t}\bigg]\mathrm{d}W_t^0\\
			&+\frac{1}{2}\widetilde{\mathbb{E}}^1\bigg[\frac{\partial^2U}{\partial v \partial \mu}(X_t,\alpha_t,t)(\widetilde{Y}_t)\Big(\frac{\widetilde{\mathbb{E}}^1[\widetilde{\pi}_t\widetilde{\sigma}_t]}{\widetilde{Y}_t}\Big)^2\bigg]\mathrm{d}t\\
			&+\frac{1}{2}\widehat{\mathbb{E}}^1\bigg[\widetilde{\mathbb{E}}^1\bigg[\frac{\partial^2U}{\partial \mu^2}(X_t,\alpha_t,t)(\widetilde{Y}_t,\widehat{Y}_t)\frac{\widetilde{\mathbb{E}}^1[\widetilde{\pi}_t\widetilde{\sigma}_t]}{\widetilde{Y}_t}\frac{\widehat{\mathbb{E}}^1[\widehat{\pi}_t\widehat{\sigma}_t]}{\widehat{Y}_t}\bigg]\bigg]\mathrm{d}t\\
			&+\widetilde{\mathbb{E}}^1\bigg[\frac{\partial^2U}{\partial x \partial \mu}(X_t,\alpha_t,t)(\widetilde{Y}_t)\pi_tX_t\sigma_t\frac{\widetilde{\mathbb{E}}^1[\widetilde{\pi}_t\widetilde{\sigma}_t]}{\widetilde{Y}_t}\bigg]\mathrm{d}t.
		\end{split}
	\end{align}
	Substituting the derivatives in Proposition \ref{prop7} into (\ref{eq:29}), we obtain $\mathbb{P}^0$-a.s., in the case that $\delta_t\neq1$,
\begin{align}\label{eq:30}
	\begin{split}
		\frac{\mathrm{d}U(X_t,\alpha_t,t)}{U(X_t,\alpha_t,t)}=\;&\Big(1-\frac{1}{\delta_t}\Big)\bigg[-\frac{\Sigma_t}{2\delta_t}(\pi_t)^2+\frac{1}{\delta_t}\Big(\mu_t\delta_t+(1-\delta_t)\theta_t\sigma_t\mathbb{E}^1[{\pi}_t{\sigma}_t]\Big)\pi_t\\
		&\qquad\qquad \qquad+\Big(1-\frac{1}{\delta_t}\Big)(\theta_t)^2\mathbb{E}^1[{\pi}_t{\sigma}_t]^2 \\
		&\qquad\qquad \qquad-\theta_t\Big(\mathbb{E}^1[{\pi}_t{\mu}_t]-\mathbb{E}^1[(
		{\pi}_t)^2{\Sigma}_t]\Big)+\frac{1}{K_t}\frac{\mathrm{d}K_t}{\mathrm{d}t} \bigg]\mathrm{d}t\\
		&+\Big(1-\frac{1}{\delta_t}\Big)\bigg(\pi_t\nu_t\mathrm{d}W_t+\Big(\pi_t\sigma_t-\theta_t\mathbb{E}^1[{\pi}_t{\sigma}_t]\Big)\mathrm{d}W_t^0\bigg).
	\end{split}
\end{align}
	For the process $U(x,\mu,t)$ to be a forward relative performance, a value of the drift term in (\ref{eq:30}) should be less than or equal to 0 for all $\pi_t\in\mathcal{A}_\text{MF}$, and 0 for an optimal strategy. Note that the drift term in (\ref{eq:30}) is a quadratic function with respect to $\pi_t$ where the quadratic coefficient is negative.\\
	Let
	\begin{equation}\label{eq:31}
		\pi_t^*=\frac{1}{\Sigma_t}\Big(\mu_t\delta_t+(1-\delta_t)\theta_t\sigma_t\mathbb{E}^1[\pi_t\sigma_t]\Big),
	\end{equation}
	then the drift term is equal to $-\frac{1}{2}\frac{\Sigma_t}{\delta_t}(X_t)^{1-\frac{1}{\delta_t}}\vert \pi_t-\pi_t^{*}\vert^2$ when $K_t$ is given as in (\ref{eq:27}).\\
	Now multiplying both sides of (\ref{eq:31}) by $\sigma_t$, and taking conditional expectations on both sides, we obtain
	\begin{equation*}
		\mathbb{E}^1[\pi_t\sigma_t]=\frac{\varphi^\sigma(t)}{1-\psi^\sigma(t)}.
	\end{equation*}
	Substituting it into (\ref{eq:31}), we obtain a desired optimal strategy $\pi_t^{*}$.\\
	In the case that $\delta_t=1$, we can obtain analogous results, $\mathbb{P}^0$-a.s. we have,
	\begin{align}\label{eq:32}
	\begin{split}
		\mathrm{d}U(X_t,\alpha_t,t)=\;&\bigg[-\frac{\Sigma_t}{2}K_t(\pi_t)^2+\mu_t\pi_t-\theta_tK_t\Big(\mathbb{E}^1[{\pi}_t{\mu}_t]-\mathbb{E}^1[(
		{\pi}_t)^2{\Sigma}_t]\Big)\\
		&\qquad\qquad\qquad\qquad\qquad\quad+\Big(\log\Big(\lambda(\alpha_t)^{-\theta_t}X_t\Big)\frac{\mathrm{d}K_t}{\mathrm{d}t}+\frac{\mathrm{d}G_t}{\mathrm{d}t}\Big)\bigg]\mathrm{d}t\\
		&+K_t\pi_t\Big(\nu_t\mathrm{d}W_t+\sigma_t\mathrm{d}W_t^0\Big).
	\end{split}
\end{align}
	Thus, through a similar procedure, the drift term becomes $-\frac{1}{2}\Sigma_tK_t\vert \pi_t-\pi_t^*\vert^2$ when $K_t$ and $G_t$ are given as in (\ref{eq:27}), (\ref{eq:28}).\\
	For both cases, we can easily check that $\big(U^*,\pi^*\big)$ satisfies conditions given in Definition \ref{def7}. Thus, $\big(U^*,\pi^*)$ is an MF-forward relative performance equilibrium.\\
	Similarly, we have
	\begin{equation*}
		\mathbb{E}^1[\pi_t\mu_t]=\mathbb{E}^1\Big[\frac{(\mu_t)^2\delta_t}{\Sigma_t}\Big]+\mathbb{E}^1\Big[(1-\delta_t)\frac{\mu_t\theta_t\sigma_t}{\Sigma_t}\Big]\frac{\varphi^\sigma(t)}{1-\psi^\sigma(t)},
	\end{equation*}
	\begin{equation*}
		\mathbb{E}^1[(\pi_t)^2\Sigma_t]=\mathbb{E}^1\bigg[\frac{1}{\Sigma_t}\bigg(\mu_t\delta_t+(1-\delta_t)\theta_t\sigma_t\frac{\varphi^\sigma(t)}{1-\psi^\sigma(t)}\bigg)^2\;\bigg].
	\end{equation*}
\end{proof}

\begin{remark}
	As $n\to\infty$, the strategies $\pi_t^{i,*}$, weights $(\varphi_n^\sigma,\psi_n^\sigma)$, the utility $U^i(x,\mu,t)$ and $\mathcal{F}$-progressively measurable processes $K_t^i$ and $G_t^i$ in Theorem \ref{thm9} converge to the corresponding quantities in Theorem \ref{thm10}.
\end{remark}

\section{Conclusion}\label{sec5}
In this study, we show that the forward Nash equilibrium and the mean field equilibrium exist for the $n$-agent game and the corresponding mean field stochastic optimal control problem, respectively in the market model with random coefficients. Each agent looks for an optimal investment strategy that makes an utility a martingale. We focus on agents who have CARA or CRRA risk preferences for their investment optimization problem and consider five random coefficients which consist of three market parameters $\mu_t,\nu_t,\sigma_t$ and two preference parameters $\delta_t,\theta_t$. We conclude that our optimal portfolio formulas extend the corresponding findings of the market model with constant coefficients in \cite{RP1,RP3}. In our model with random coefficients, we can update the model coefficients in response to variations in market conditions in real-time.\\

%We also propose an alternative method using the mean field stochastic optimal control to extend the $n$-agent game to the mean field type control problem in portfolio theory. In general, finding the Nash equilibrium in the $n$-agent game is challenging. However, the mean field type control problem is tractable to handle since it reduces the dimensionality of the problem. Furthermore, it allows us to obtain an approximate solution of the $n$-agent game. Therefore, we expect to contribute to solving the $n$-agent game, in which it is difficult to find the Nash equilibrium, by presenting a new methodology that extends to the mean field type control problem.\\

\noindent\small{\textbf{Acknowledgement}} \footnotesize{The author wishes to thank his thesis supervisor Professor Geon Ho Choe for his encouragement over the years. He also thanks Professor Goncalo dos Reis for his helpful comments.}\\

\noindent\small{\textbf{Funding}} \footnotesize{The author is supported by the National Research Foundation of Korea (NRF) grant funded by the Korea government (MSIT) No. 2021R1A2C1010508.}
%%===========================================================================================%%
%% If you are submitting to one of the Nature Portfolio journals, using the eJP submission   %%
%% system, please include the references within the manuscript file itself. You may do this  %%
%% by copying the reference list from your .bbl file, paste it into the main manuscript .tex %%
%% file, and delete the associated \verb+\bibliography+ commands.                            %%
%%===========================================================================================%%

\end{document}